\documentclass[a4paper,runningheads,envcountsamme]{article}
  
\usepackage[inline]{enumitem}
\usepackage{thmtools}
\usepackage{thm-restate}
  
  \usepackage[utf8]{inputenc}
\usepackage[T1]{fontenc}
\usepackage{amsmath,amsthm,amssymb,amsfonts,mathtools}
\usepackage{xspace}
\usepackage[ruled,vlined,linesnumbered]{algorithm2e}
\usepackage{hyperref}
\usepackage{xcolor}
\hypersetup{
    colorlinks=true,
    linkcolor=blue,
    urlcolor=blue,
    citecolor=blue
}
\usepackage{cleveref,soul}
\usepackage{verbatim}
\usepackage{tikz}
\usepackage{setspace}
\usetikzlibrary{decorations}
\usepackage{comment}
\usepackage{graphicx}
\usepackage{subcaption}
\usepackage{caption}
\usepackage{authblk}
\usepackage{float}

\usetikzlibrary{positioning, fadings, backgrounds}

\usepackage[textsize=footnotesize,color=green!40]{todonotes}
\usepackage[hmargin=2.5cm,vmargin=3cm]{geometry}

\tikzset{snake it/.style={decorate, decoration=snake}}
\newcommand{\Ball}[2]{B_{#1}\left({#2}\right)}
\newcommand{\Sphere}[2]{N_{#1}(#2)}

\newcommand{\InducedSub}[2]{#1\left[#2\right]}

\newcommand{\dist}[2]{\mathsf{d}\left(#1,#2\right)}
\newcommand{\distG}[3]{\mathsf{d}_{#1}\left(#2,#3\right)}\newcommand{\weight}[1]{\mathsf{w}\left(#1\right)}
\newcommand{\Interval}[2]{\left[#1,#2\right]}
\newcommand{\ecc}[1]{ecc\left(#1\right)}
\newcommand{\cardinal}[1]{\left|#1\right|}

\newcommand{\Intervalpath}[2]{F_{#1,#2}}
\newcommand{\Projection}[2]{Pr\left(#1,#2\right)}

\definecolor{dartmouthgreen}{rgb}{0.05, 0.5, 0.06}

\newtheorem{theorem}{Theorem}

\newtheorem{lemma}[theorem]{Lemma}
\newtheorem{observation}[theorem]{Observation}

\newtheorem{claim}{Claim}[theorem]

\newenvironment{claimproof}[1][\proofname]{%
  \begin{proof}[#1]
}{%
  \end{proof}%
}

\title{ Minimum eccentricity shortest paths of $K_{2,3}$-minor-free graphs}

\begin{document}

\author[1]{Dibyayan Chakraborty}
\author[2]{Sandip Das}
\author[2]{Sk Samim Islam}
\author[2]{Ritam Manna Mitra}
\author[2]{Saumya Sen}
\affil[1]{University of Leeds, Leeds, UK}
\affil[2]{Indian Statistical Institute, Kolkata, India}




\maketitle

\begin{abstract}
Given a simple, undirected, and unweighted graph $G$, and an integer $R$, the objective of the \textsc{Minimum Eccentricity Shortest Path (MESP)} is to decide whether there exists an \emph{isometric path} $P$ in $G$ such that the distance from every vertex in the graph to its nearest vertex in $P$ is at most $R$. In this paper, we prove that MESP admits an $O(n^4)$-time algorithm on $K_{2,3}$-minor-free graphs. Our algorithm has a cubic running time when the inputs are restricted to a cactus.  
\end{abstract}


\section{Introduction}\label{sec:intro}

All graphs considered in this paper are simple and undirected. Unless otherwise stated, a graph is also unweighted. In this paper, we study the computational complexity of  \textsc{Minimum Eccentricity Shortest Path (MESP)}, a ``distance-based '' algorithmic problem introduced by
Dragan and Leitert~\cite{dragan2017minimum}. Given a graph $G$ the objective of \textsc{MESP} is to find an \emph{isometric} path (i.e. a path of minimal length between its end vertices) $P$ such that the maximum distance between any vertex of $G$ and $P$ is the minimum. The decision version of the problem is stated as: Given a graph $G$ and an integer $R$, the objective of \textsc{MESP} is to decide if $G$ has an isometric path $P$ such that the maximum distance between any vertex of $G$ and $P$ is at most $R$. \textsc{MESP} is related to fundamental algorithmic problems like \emph{minimum distortion embeddings}, \emph{$k$-dominating sets} and the \emph{$k$-laminar problem}~\cite{dragan2017minimum}. From practical perspectives, \textsc{MESP} can be thought of as the problem of finding a ``short'' route that is also highly accessible in a network.

Dragan and Leitert~\cite{dragan2017minimum} proved that, for any integer $k$, it is possible to decide in $O(n^{2k+2})$-time,  if a simple, undirected, and unweighted graph has an isometric path of eccentricity at most $k$. On the other hand, \textsc{MESP} remains NP-complete even on \emph{subcubic partial grids}~\cite{chakraborty2025additive}, a subclass of planar bipartite graphs of maximum degree at most $3$. However, the computational complexity of \textsc{MESP} on important restricted classes of planar graphs like \emph{outerplanar} graphs, \emph{series-parallel} graphs, \emph{solid grids} etc., remain unknown. 

In this paper, we prove that \textsc{MESP} admits a polynomial time algorithm on $K_{2,3}$-minor-free graphs, a superclass of outerplanar graphs. Specifically, we prove the following theorem.

\begin{restatable}{theorem}{main}\label{thm:main}
	\textsc{MESP} admits an $O(n^4)$-time algorithm on $K_{2,3}$-minor-free graphs, where $n$ is the number of vertices of the input graph.
\end{restatable}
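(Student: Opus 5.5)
We plan to reduce the problem to the 2-connected case and then use the very restricted structure of 2-connected $K_{2,3}$-minor-free graphs. We use the classical characterisation: every 2-connected $K_{2,3}$-minor-free graph is either $K_4$ or an outerplanar graph, possibly with some extra edges between the two endpoints of a chord-like separation pair. More precisely, it is a cycle with non-crossing chords, except for a bounded number of $K_4$ pieces. Hence isometric paths in each block have a very simple form. Such a path enters the block at one cut vertex or boundary vertex, travels along a shortest route (along the Hamiltonian cycle, possibly using chords), and leaves at another.

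We first compute all-pairs distances by BFS in $O(n^2)$ time, since $m=O(n)$ for $K_{2,3}$-minor-free graphs. For a fixed pair of end vertices $(s,t)$, the set of vertices lying on shortest $s$-$t$ paths decomposes along the block-cut tree path between the blocks of $s$ and $t$. We then argue that at most $O(n^2)$ pairs need to be tried. For each pair, we show that an optimal isometric $s$-$t$ path can be chosen greedily, in $O(n^2)$ time, via dynamic programming over the blocks on this path. This yields the $O(n^4)$ bound.

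The key structural lemma we aim for is the following. Within a single 2-connected block $B$ with entry vertex $a$ and exit vertex $b$, the shortest $a$-$b$ paths form a "ladder". This is a sequence of separation pairs, where between consecutive pairs the path can choose one of two sides of a face cycle, using outerplanarity. For each block hanging off this ladder through a cut vertex $c$, the contribution to the eccentricity is a fixed value $h(c)$, namely the depth of the hanging subgraph, plus the distance from $c$ to the chosen path. Choosing a side in one face of the ladder affects the distances only of vertices attached to that face. This holds because any vertex on the other side reaches the path through the separation pair. Hence the eccentricity of the path is the maximum over faces of a locally determined quantity, and the best side can be chosen independently per face in linear time.

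The main obstacle will be proving this locality claim rigorously. A vertex inside one face might be closer to a path vertex in a different face via a chord, so its distance to $P$ is not obviously determined locally. I would first try to show the following: for a vertex $v$ in a face bounded by separation pair $\{x,y\}$ and $\{x',y'\}$, the value $\dist{v}{P}$ is determined by $P$ restricted to that face together with the four vertices $x,y,x',y'$, which all lie on $P$ or are adjacent to it by isometry. This should follow from the $K_{2,3}$-minor-freeness. Three internally disjoint routes from $v$ to far parts of $P$ would create a $K_{2,3}$ minor. Once locality holds, the remaining work is bookkeeping: computing hanging depths $h(c)$ for all cut vertices in $O(n^2)$ total, and for cacti the faces are just cycles, which removes one factor of $n$ and gives the cubic bound.
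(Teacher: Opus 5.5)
\textbf{Verdict: genuine gap.} Your outer skeleton matches the paper: try all $O(n^2)$ pairs, precompute distances by BFS using $m=O(n)$, and spend $O(n^2)$ per pair. Using the classical fact that every block of a $K_{2,3}$-minor-free graph is outerplanar or $K_4$ would be a legitimate substitute for the paper's level-structure lemmas. Your refinement with ``extra edges'' and ``$K_4$ pieces'' is not correct, but the plain statement is enough.

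The gap is in the per-pair step. The claim that a side can be chosen independently in each face, with the eccentricity a maximum of single-face quantities, is false. So the locality lemma you plan to prove cannot hold in that form.

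\textbf{Feasibility couples consecutive faces.} Take the ladder $P_2\times P_n$, which is outerplanar, with $s=(1,1)$ and $t=(2,n)$. The interval is the whole ladder, and consecutive $4$-faces share a rung. The isometric paths are exactly ``run along the top row, drop once, run along the bottom row,'' so choosing the bottom in one face forces the bottom in every later face. Now hang pendant paths of length $L$ at $(2,2)$ and at $(1,n-1)$, with $n\ge 4$. The face-wise optima put both attachment vertices on the path and give value $L$. No isometric path contains both, and the true optimum is $L+1$. Hence a face cannot be described by a ``side'' alone. You must record where the path enters the hole (at $a$ or $r_1$) and where it leaves (at $b$ or $r_{h-1}$), and only compatible traces of consecutive holes may be concatenated. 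Greedy or independent choices give either a non-path or a wrong value.

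\textbf{The cost also couples consecutive faces.} Take $4$-holes $a_1l_1b_1r_1$ and $a_2l_2b_2r_2$, joined by a path of length $\ell$ from $b_1$ to $a_2$. Add an outer ear of length $2m>\ell+2$ from $l_1$ to $l_2$. The graph stays outerplanar, and the ear lies outside the interval. The midpoint $u$ of the ear is at distance $m$ from $P$ if $l_1\in V(P)$ or $l_2\in V(P)$, and at distance $m+1$ otherwise. This is not of the form $\max\{f_1(c_1),f_2(c_2)\}$ in the two side choices. Three of the four combinations have value $m$, which forces $f_1(R)\le m$ and $f_2(R)\le m$, contradicting the fourth. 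So ``$u$ is attached to one face'' fails. The most a $K_{2,3}$-minor argument gives is that a projection meets at most two consecutive holes or blocks, as in \Cref{lem:pro4}.

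\textbf{The missing idea is the paper's auxiliary DAG $\overrightarrow{H}_{s,t}$.}
\begin{itemize}
\item Its nodes are the at most five possible traces $P\cap D_i$ of each hole.
\item Its arcs join compatible traces of consecutive holes.
\item Each arc is weighted by the largest distance among the vertices whose distance to $P$ is fixed by that pair of traces.
\item One then computes a min-bottleneck source--sink path.
\end{itemize}
Without states and weights on pairs of consecutive faces, your per-face greedy step cannot be made correct.
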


In the above theorem, $K_{2,3}$ denotes the complete bipartite graph with partite sets containing two and three vertices respectively. A graph $H$ is a minor of a graph $G$ if $H$ can be obtained by deleting vertices, deleting edges or contracting edges of $G$. Clearly, $K_{2,3}$-minor-free graphs are planar. Moreover, $K_{2,3}$-minor-free graphs have \emph{treewidth} at most $3$. 
Note that the computational complexity of \textsc{MESP} on graphs with constant \emph{treewidth} is unknown. In fact, it is unknown if MESP admits a polynomial time algorithm on graphs of treewidth at most $2$, which are equivalent to \emph{series-parallel} graphs. Therefore, even though the $K_{2,3}$-minor-free graphs have treewidth at most $3$, the existing results are not enough to derive \Cref{thm:main}. 
A graph $G$ is a \emph{cactus} if every biconnected component of $G$ is a cycle. We show that if the inputs are restricted to \emph{cactus}, the time complexity of our algorithm can be reduced to $O(n^3)$.

\begin{restatable}{theorem}{mainn}\label{thm:mainn}
		\textsc{MESP} admits an $O(n^3)$-time algorithm on cactus with $n$ vertices.
\end{restatable}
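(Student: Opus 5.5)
The plan is to reduce MESP on a cactus $G$ to evaluating, for each candidate pair of endpoints $(u,v)$, the best isometric $u$--$v$ path. In a cactus every pair of vertices at distance $d$ is joined by isometric paths that follow a fixed sequence of blocks (the block--cut tree path from the block of $u$ to the block of $v$). The only freedom is which way to go around each cycle that is traversed between two cut vertices at exactly antipodal distance (even cycles with entry and exit opposite). So for a fixed pair $(u,v)$ the set of isometric $u$--$v$ paths is a product of independent binary choices, one for each such ``balanced'' cycle on the route. First I precompute all-pairs distances in $O(n^2)$ by BFS, since $m=O(n)$ in a cactus. I also precompute the block--cut tree.

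Next I observe that the eccentricity of a path $P$ decomposes over the structure hanging off $P$. Removing $P$ leaves pieces, each attached to $P$ either at a single cut vertex or as the remainder of a cycle of which $P$ uses an arc. For a pendant subtree of the block--cut tree attached at a cut vertex $c$, the contribution is its depth from $c$, a value $h(c,\cdot)$ precomputed once per (cut vertex, incident block) pair in $O(n^2)$ total. For a cycle $C$ traversed along an arc from $a$ to $b$, the vertices of $C$ off the arc, together with their hanging depths, contribute $\max_{x\in C\setminus P}\bigl(h(x)+\min(\dist{x}{a'},\dist{x}{b'})\bigr)$. Here $a',b'$ are the arc endpoints, or more precisely the nearest arc vertices, which for off-arc vertices are $a$ or $b$. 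This quantity depends only on $(C,a,b,\text{direction})$. Since each cycle has $O(|C|^2)$ pairs of arc endpoints, all these values can be tabulated in $O(\sum_C |C|^3)=O(n^3)$ time; with prefix-maxima tricks this drops to $O(\sum_C|C|^2)$.

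Endpoint blocks need care, because $u$ and $v$ may be interior to a cycle and the path uses only part of it. These are handled by the same arc table, with one endpoint of the arc being $u$ (resp.\ $v$). The arc table also covers the tail beyond $u$, namely the rest of the block--cut tree reached through other blocks at $u$, as pendant depths.

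Given the tables, for fixed $(u,v)$ the optimal path maximizes nothing jointly. The eccentricity is a maximum of independent terms, so at each balanced cycle I pick the direction giving the smaller term, and the total is the max over blocks on the route. Done naively this costs $O(n)$ per pair, for $O(n^3)$ overall, which meets the bound in Theorem~\ref{thm:mainn}. To keep it at $O(n)$ per pair, I fix $u$ and do a DFS over the block--cut tree from $u$, carrying the running maximum. I must be careful that the pendant contribution at a cut vertex on the route excludes the next route block, which requires the per-(cut vertex, block) depths precomputed above. The answer is YES iff some pair attains value $\le R$.

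The main obstacle I expect is the correctness claim that, conditioned on the block sequence, distances from off-path vertices to $P$ really decompose per block as described. Concretely, a vertex hanging off cycle $C$ has its nearest path vertex on $C\cap P$, because every path from it to $P$ must enter $P$ through $C$ or through a cut vertex of $C$. I also need that the two choices at distinct balanced cycles do not interact. I would prove this using the fact that any path from a vertex $x$ to $P$ must pass through the cut vertex separating $x$'s component of $G-P$ from $P$.
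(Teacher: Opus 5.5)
Your proposal is correct, and it reaches the bound by a genuinely different route from the paper.

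\textbf{The paper's route.} The paper does not design a cactus-specific algorithm. For every pair $s,t$ it runs the general $K_{2,3}$-minor-free procedure: compute $\Intervalpath{s}{t}$, order its holes, build the auxiliary DAG $\overrightarrow{H}_{s,t}$, weight its edges by $w_1,w_2$, and find a bottleneck $(s',t')$-path. Its only cactus-specific input is \Cref{lem:cac2}. This gives that every component of $G-\Interval{s}{t}$ attaches through a single hole or a single connecting subpath. Hence the projection and weight computation, which costs $O(n^2)$ per pair in general, drops to $O(n)$ per pair via a BFS confined to each component.

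\textbf{Your route.} You bypass the DAG entirely and work in the block--cut tree. You show three things: the isometric $u$--$v$ paths form a product of independent binary choices at balanced cycles; each component of $G-V(P)$ attaches at one vertex or at the two ends of an arc of a route cycle; and therefore the eccentricity is a maximum of terms each depending on at most one choice. So the $\min\max$ separates into per-cycle minima. In the paper's language: in a cactus every hole has exactly two traceable subpaths, consecutive layers of $\overrightarrow{H}_{s,t}$ are completely joined, and the objective is separable across layers, so the bottleneck-path search collapses.

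\textbf{What each buys.} Your version has a self-contained correctness argument, namely that distances to $P$ factor through attachment vertices, and it does not lean on the general projection lemmas. It also replaces per-pair recomputation of intervals and projections by global tables ($h(c,B)$ and the arc tables) built once. Each pair then costs only a walk along its route, and your DFS from $u$ can share that work across all $v$. The paper's version buys uniformity: Theorem~\ref{thm:mainn} becomes just a sharper running-time analysis of the algorithm already argued for $K_{2,3}$-minor-free graphs.

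\textbf{A detail to state explicitly.} The term you minimize at a balanced cycle must also include the pendant depths of the interior vertices of the chosen arc, since these change with the choice too. Alternatively, observe the following for an arc $A$ with opposite arc $\bar A$: $\max_{y\in A^\circ}h(y)$ is strictly smaller than the off-arc value of the choice $\bar A$. So picking the arc with the smaller off-arc table value is already optimal.
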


\paragraph*{Overview of our algorithm.} Given a $K_{2,3}$-minor-free graph $G$, our algorithm iterates over every pair of vertices $s,t$ and finds an isometric path $P_{s,t}$ with $s,t$ as end-vertices that minimizes the eccentricity and reports the one with the overall minimum eccentricity. 

For a fixed pair of vertices $s,t\in V(G)$, we carefully  analyse the structure of \emph{$(s,t)$-interval}, i.e., the subgraph created by the set of edges lying on some isometric path between $s$ and $t$. Even though there could be exponentially many distinct $(s,t)$-isometric paths,
we observe that the vertices of the $(s,t)$-interval can be decomposed into a family $\mathcal{Q}$ consisting of $O(n)$ isometric paths. We use these properties to create an edge weighted auxiliary directed graph $\overrightarrow{H}$ that contains two special vertices $s',t'$ such that any $(s',t')$-path in $\overrightarrow{H}$ corresponds to an $(s,t)$-isometric path in $G$ and vice versa. Moreover, we ensure that the maximum weight of an $(s',t')$-path in $\overrightarrow{H}$ is the eccentricity of the corresponding $(s,t)$-shortest path in $G$. Finally, we use standard text book algorithms to find a $(s',t')$-path in $\overrightarrow{H}$ that minimizes the maximum edge weight in the path.

We note that our technique is different from those used in the literature to design polynomial-time algorithms for MESP. For example, finding a diametral pair and any isometric path between them is a popular technique that has been used to solve MESP on trees, distance hereditary graphs etc. However, such a method will not work even on cactus as exhibited in the following figure. 

\begin{figure}[ht]
    \centering
    \includegraphics[scale=0.5]{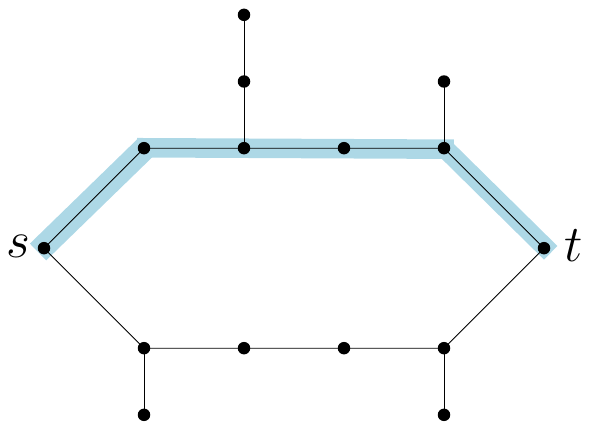}
    \caption{An example of a cactus with no diametral path having the smallest eccentricity. A minimum eccentricity shortest path is shaded in blue.}
    \label{fig:cac}
\end{figure}

\paragraph*{Related works}

Clearly, a graph has an isometric path of eccentricity $0$ if and only if it is a path. The class of graphs that has an isometric path of eccentricity at most $1$ (also known as a \emph{dominating shortest path}), is richer and includes \emph{asteroidal triple-free} graphs~\cite{corneil1999linear} (and thus \emph{interval} graphs, \emph{cocomparability} graphs, etc.) Convex bipartite graphs have an isometric path of eccentricity at most $2$~\cite{dragan2017line}. On the other hand, the eccentricity of isometric paths of trees is unbounded.

Dragan \& Leitert~\cite{JGAA-394} studied the computational complexity of \textsc{MESP} on structured graph classes. They designed polynomial time algorithms for \textsc{MESP} on \emph{distance-hereditary} graphs, \emph{chordal} graphs etc. More generally, they proved that \textsc{MESP} admits an $O(n^{4\delta+4})$-algorithm on graphs with bounded \emph{hyperbolicity} $\delta$, a graph invariant introduced by Gromov~\cite{gromov1987} which measures how far the  distance metric of a graph is from that of a tree. In fact, the above is a corollary of the following result proved by the same authors: \textsc{MESP} admits a polynomial time algorithm on graphs with bounded \emph{projection gap}. We do not define \emph{projection gap} in this paper but cycles have unbounded projection gap. Therefore, \Cref{thm:main,thm:mainn} cannot be derived from the above result.

Kučera \& Suchý~\cite{kuvcera2023minimum} studied the fixed-parameter tractability of MESP. They proved that MESP admits FPT algorithms w.r.t. modular width, distance to cluster graph, the combination of treewidth with the desired eccentricity, and maximum leaf number.  Bhyravarapu et al.~\cite{bhyravarapu2023parameterized} extended some of the above results by providing FPT algorithms for MESP parameterized by the disjoint paths deletion set, split vertex deletion set, or the sum of feedback vertex set number
and the desired eccentricity. On the negative side, MESP is W[2]-hard w.r.t. the desired eccentricity~\cite{dragan2017minimum} and chordal vertex deletion set~\cite{bhyravarapu2023parameterized}. Each of the above parameters remains unbounded in outerplanar graphs.

The approximability of \textsc{MESP} has been extensively studied. Dragan \& Leitert~\cite{dragan2017minimum} developed a $2$-approximation, a $3$-approximation, and an $8$-approximation algorithm that run in $O(n^3)$-time, $O(nm)$-time, and $O(m)$-time, respectively, where $n$ and $m$ denote the numbers of vertices and edges of the input graph respectively. Birmelé et al.~\cite{birmele2016minimum} proposed a $3$-approximation algorithm for MESP which runs in linear time. In the PhD Thesis of A.O. Mohammed~\cite{mohammed2019slimness}, an $O(\delta)$-approximation algorithm for MESP on $\delta$-hyperbolic graphs has been proposed. Such algorithms do not provide even a constant-factor approximation algorithms on outerplanar graphs.

Several generalizations of MESP have been studied. Birmelé et al.~\cite{birmele2016minimum} studied \emph{hub-laminar decompositions} that aims to decompose a graph into subgraphs containing isometric paths of bounded eccentricity. Chakraborty \& Vaxès~\cite{chakraborty2025additive} studied the approximability of \textsc{$k$-Geodesic Center} where the objective is to select $k$ isometric paths of the input graph that minimize the maximum eccentricity. The authors proposed an additive $O(\delta)$-approximation algorithm for $\delta$-hyperbolic graphs. 

\section{Preliminaries}\label{sec:prelim}

 Let $G$ be a graph. For a set $S$ of vertices, $\InducedSub{G}{S}$ denotes the subgraph of $G$ induced by $S$, and $G-S =\InducedSub{G}{V(G)\setminus S}$. When $S=\{w\}$, we write $G-\{w\}$ as $G-w$. For two graphs $H$ and $G$, we write $H\subseteq G$ if $H$ is a subgraph of $G$. A set $S$ of vertices is a \emph{cutset}, if $G-S$ is disconnected. A set $S$ of vertices is an \emph{$(u,v)$-cutset}, if $u$ and $v$ lie in two different components of $G-S$. A block of a graph G is a maximal connected subgraph of $G$ that has no cut-vertex. 
 A \emph{chord} of a cycle $C$ is an edge not in the edge set of $C$ but whose endpoints lie in the vertex set of $C$.  A \emph{hole} is a chordless cycle of length at least $4$.

  The length of a path $P$ is the number of edges in $P$. For two vertices $u,v\in V(G)$, a $(u,v)$-path (resp. $(u,v)$-induced path) is a path (resp. an induced path) between $u$ and $v$. An $(u,v)$-isometric path is an $(u,v)$-path with smallest possible length. 
	 The \emph{distance} between two vertices $u,v\in G$, denoted by $\dist{u}{v}$, is the length of an $(u,v)$-isometric path. For a set $S$ and a vertex $u$, the distance between $u$ and $S$, denoted by $\dist{u}{S}=\min\{\dist{u}{v}\colon v\in S\}$.	For an integer $k\geq 0, r\in V(G)$, let $\Ball{k}{r}$ denote the \emph{ball of radius $k$} centered at $r$, i.e., $\Ball{k}{r} = \{u\in G\colon \dist{r}{u}\leq k \}$. Similarly let $\Sphere{k}{r}$ denote the $k^{th}$ neighbourhood of $r$, i.e.,  $\Sphere{k}{r} = \{u\in G\colon \dist{r}{u}= k \}$.  We shall write $\Sphere{1}{r}$ simply as  $N(r)$. For $H \subseteq G$, $\Sphere{H}{x}$ is the set of neighbours of $x$ in subgraph $H$ of $G$. 
    

    The \emph{eccentricity} of a vertex $r$, denoted by $\ecc{r}$ is the largest integer $k$ such that $\Sphere{k}{r}$ is non-empty. The eccentricity of a set $S\subseteq V(G)$ is $\ecc{S}=\max\{\dist{v}{S}\colon v\in V(G)\}$. For two vertices $u,v\in V(G)$, the \emph{interval of $u,v$}, denoted by $\Interval{u}{v}$, is the set of all vertices that lie in some $(u,v)$-isometric path.  Let $\Intervalpath{u}{v}$ denote the subgraph of $G$ whose edge set is the union of all $(u,v)$-isometric paths and whose vertex set is $\Interval{u}{v}$. 

    \section{The algorithm}\label{sec:algorithm}

For the entirety of this section, $G$ shall denote a fixed $K_{2,3}$-minor-free graph. 
\subsection{Construction of the auxiliary graph}

First we state the main ideas of constructing the auxiliary graph. For a pair of vertices $s,t\in V(G)$, we analyse the subgraph $\Intervalpath{s}{t}$ and observe the following three facts: (a)~The set of all induced cycles of $\Intervalpath{s}{t}$ admits a linear ordering (see \Cref{lem:linear-order}). 
(b)~The projection of any vertex $u\not\in \Interval{s}{t}$ on $\Interval{s}{t}$ is restricted to at most four vertices that lie in the same induced cycle or in two induced cycles that are consecutive according to the above linear order (see \Cref{lem:pro2} and \Cref{lem:pro4}). 
(c)~Intersection of any $(s,t)$-isometric path with an  induced cycle creates at most five maximal isometric subpaths (see \Cref{lem:4-paths}).


The above properties allow us to create a directed acyclic graph $\overrightarrow{H}$ whose vertices correspond to the important subpaths of induced cycles of $\Intervalpath{s}{t}$. The directed edges of $\overrightarrow{H}$ are assigned between vertices that correspond to subpaths of two induced cycles that are consecutive in the linear order. For technical reasons we add two dummy vertices $s',t'$ in $\overrightarrow{H}$ which will be the \emph{source} and \emph{sink}, respectively. Finally, we show that there is a correspondence between the set of all $(s,t)$-isometric paths in $G$ and the set of all $(s',t')$-paths in $\overrightarrow{H}$ (see \Cref{lem:path-correspond}). Moreover, we shall show that the size of $V(\overrightarrow{H})$ is $O(n)$. We formalise the above ideas below. First, we prove some properties of $K_{2,3}$-minor-free graphs.

\begin{lemma}\label{lem:same-block}
	For two vertices $s,t\in V(G)$, let $u,v$ be two vertices of $H\coloneqq \Intervalpath{s}{t}$ such that $\distG{H}{s}{u}=\distG{H}{s}{v}$. Then, $u,v$ lie in the same block of $H$. 
\end{lemma}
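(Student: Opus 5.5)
The plan is a proof by contradiction: if $u$ and $v$ lie in different blocks of $H$, some cut vertex of $H$ separates them, and the geodesic structure of $H$ forces that cut vertex to be at the same distance from $s$ as $u$ and $v$, which is impossible for a single vertex distinct from both. Write $k \coloneqq \distG{H}{s}{u}=\distG{H}{s}{v}$ and assume $u\neq v$ (otherwise there is nothing to prove).

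First I record two basic facts about $H=\Intervalpath{s}{t}$.

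\begin{enumerate}[label=(\roman*)]
\item Every vertex $x$ of $H$ satisfies $\distG{H}{s}{x}=\dist{s}{x}$ and $\distG{H}{s}{x}+\distG{H}{x}{t}=\dist{s}{t}$. This holds because $x$ lies on an $(s,t)$-isometric path of $G$, and that path is contained in $H$.
\item Consequently, every $(s,t)$-isometric path visits exactly one vertex of each level set $L_i=\{x\in V(H): \distG{H}{s}{x}=i\}$, for $0\le i\le \dist{s}{t}$.
\end{enumerate}

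Now suppose, for contradiction, that $u$ and $v$ lie in no common block of $H$. Since $H$ is connected, it has a cut vertex $w$ such that $u$ and $v$ lie in different components of $H-w$. Here $w\notin\{u,v\}$, because if $w$ were $u$ or $v$ then $u,v$ would share a block.

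Take an $(s,t)$-isometric path $P_u$ through $u$ and an $(s,t)$-isometric path $P_v$ through $v$. Consider the walk that goes from $u$ to $s$ along $P_u$ and then from $s$ to $v$ along $P_v$. It connects $u$ to $v$ inside $H$, so it must pass through $w$. Hence $w$ lies on the $(s,u)$-subpath of $P_u$ or on the $(s,v)$-subpath of $P_v$, and in either case $\distG{H}{s}{w}\le k$. Running the same argument through $t$ gives $\distG{H}{w}{t}\le \dist{s}{t}-k$. By fact (i), $\distG{H}{s}{w}+\distG{H}{w}{t}=\dist{s}{t}$, so both inequalities are equalities and $\distG{H}{s}{w}=k$.

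It remains to rule this out. The $s$-side argument placed $w$ on the $(s,u)$-part of $P_u$ or the $(s,v)$-part of $P_v$. Suppose first that $w$ is on the $(s,u)$-part of $P_u$. That path meets level $k$ only at $u$, by fact (ii), so $w=u$, a contradiction. The other case gives $w=v$ symmetrically, again a contradiction. Therefore $u$ and $v$ lie in a common block of $H$.

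The main point needing care is the precise definition of block, namely a maximal connected subgraph without a cut vertex. We must check that failing to share a block really yields a single vertex $w\ne u,v$ separating $u$ from $v$ in $H$. This is the standard block--cut tree fact. If $u$ and $v$ lie in different blocks, the block--cut tree path between a block containing $u$ and a block containing $v$ passes through some cut vertex. If $u$ or $v$ itself is a cut vertex, we choose the blocks so that the intermediate cut vertex is different from both, which is possible because $u$ and $v$ share no block.

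Note that the $K_{2,3}$-minor-freeness of $G$ is not needed for this lemma.
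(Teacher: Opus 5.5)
Your proof is correct, and it takes a different, more explicit route than the paper's.

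\textbf{The paper's route.} The paper argues via levels. Assuming $u\neq v$, level $k$ contains at least two vertices of $H$. The paper then takes the nearest levels $i<k<j$ that contain a single vertex of $H$, say $\{x\}$ and $\{y\}$. It asserts that ``from the definition'' some block $C$ contains $x$ and $y$, and hence all vertices in between, in particular $u$ and $v$. The nontrivial content is that the stretch of $H$ between two consecutive singleton levels has no cut vertex, and the paper leaves this unargued.

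\textbf{Your route.} Your argument supplies exactly that content, without ever locating the bracketing singleton levels. A cut vertex $w$ separating $u$ from $v$ must lie on the $s$-halves of $P_u$ or $P_v$, and also on their $t$-halves. Combined with $\distG{H}{s}{w}+\distG{H}{w}{t}=\dist{s}{t}$, this pins $w$ to level $k$. Since each isometric path meets level $k$ exactly once, $w$ must then be $u$ or $v$, a contradiction. Your handling of the block--cut-tree step and of the case $u=v$ is also sound; the paper leaves $u \neq v$ implicit.

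\textbf{Trade-off.} The paper's formulation aims at a stronger structural picture, namely that the blocks of $H$ are delimited by singleton levels, and it reuses this picture in Lemma~\ref{lem:linear-order}. Your pairwise statement does not by itself give that picture. However, the same rerouting idea extends to it. For instance, a vertex on a level containing at least two vertices of $H$ is not a cut vertex of $H$: $s$ and $t$ stay connected through an isometric path using another vertex of that level, and every other vertex reaches $s$ or $t$ along one of its own isometric paths while avoiding that level.

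You are also right that $K_{2,3}$-minor-freeness plays no role in this lemma; the paper's proof does not use it either.
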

\begin{proof}
	   Clearly, $\cardinal{\Sphere{\distG{H}{s}{u}}{s}} >1$. Let $i$ be the maximum integer less than $\distG{H}{s}{u}$ such that $\cardinal{\Sphere{i}{s}} =1$ and $j$ be the minimum integer greater than $\distG{H}{s}{u}$ such that $\cardinal{\Sphere{j}{s}} =1$. Let $\Sphere{i}{s} = \{x \}$ and $\Sphere{j}{s} = \{y \}$. Thus from the definition there exists a block, say $C$ such that $x,y \in V(C)$. Thus $u,v \in V(C)$. 
\end{proof}




\begin{lemma}\label{lem:2in1level}
	For any two vertices $s,t\in V(G)$ and integer $k$ such that $1 \leq k < \dist{s}{t}$, $\cardinal{V(\Intervalpath{s}{t})\cap \Sphere{k}{s}}\leq 2$. 
\end{lemma}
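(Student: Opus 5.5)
Suppose, for contradiction, that for some $k$ with $1\le k<\dist{s}{t}$ there are three distinct vertices $a,b,c\in \Interval{s}{t}$ with $\dist{s}{a}=\dist{s}{b}=\dist{s}{c}=k$. The plan is to use these three vertices to build a $K_{2,3}$ minor whose two branch vertices of degree $3$ are $s$ and $t$, and whose three degree-$2$ branch vertices are $a,b,c$. Throughout we use that every vertex of $\Interval{s}{t}$ at distance $k$ from $s$ is at distance $\dist{s}{t}-k$ from $t$.

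\textbf{Step 1: three paths from $s$.} Fix isometric paths $P_a,P_b,P_c$ from $s$ to $a,b,c$ respectively, and $Q_a,Q_b,Q_c$ from $a,b,c$ to $t$. Every vertex of $P_x$ other than $x$ has distance less than $k$ from $s$. Every vertex of $Q_x$ other than $x$ has distance more than $k$ from $s$. Hence the $s$-side paths and the $t$-side paths are vertex-disjoint except at $a,b,c$ themselves. Since $1\le k<\dist{s}{t}$, the vertices $s$ and $t$ are different from $a,b,c$.

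\textbf{Step 2: build the minor.} The paths $P_a,P_b,P_c$ may share vertices. Their union $T_s$ is connected and contains $s$. Inside $T_s$, take a minimal subtree $T'_s$ spanning $a,b,c$. Because every vertex of $T_s$ other than $a,b,c$ lies at level less than $k$, none of $a,b,c$ is an internal vertex of this tree, so $a,b,c$ are exactly its leaves. Let $X$ be $T'_s$ with its leaves $a,b,c$ removed; $X$ is connected and adjacent to each of $a,b,c$. In the same way, using levels greater than $k$, we get a connected set $Y$ inside $\bigcup_x Q_x\setminus\{a,b,c\}$ that is adjacent to each of $a,b,c$. The sets $X$ and $Y$ are disjoint because they lie at levels $<k$ and $>k$ respectively.

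One degenerate case must be handled: if $k=1$, some of $a,b,c$ may be adjacent to $s$ directly. This is harmless, since $X=\{s\}$ then works. The symmetric case $k=\dist{s}{t}-1$ is handled the same way with $Y=\{t\}$.

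Contracting $X$ to a single vertex and $Y$ to a single vertex, and keeping $a,b,c$, yields $K_{2,3}$ as a minor of $G$. This contradicts the assumption that $G$ is $K_{2,3}$-minor-free, and proves the lemma.

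\textbf{Main obstacle.} The delicate point is making sure that $a,b,c$ are genuinely separate from the contracted sets and each touches both of them. Restricting $X$ to levels $<k$ and $Y$ to levels $>k$ guarantees this. I expect this to be routine once the level structure is stated explicitly.
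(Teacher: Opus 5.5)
Your proof is correct and is essentially the paper's argument: the paper also takes isometric paths $P_x$ from $s$ and $P'_x$ to $t$ for $x\in\{a,b,c\}$, and contracts everything except the edges incident to $x$ onto $s$ and $t$ respectively, obtaining $K_{2,3}$ with parts $\{s,t\}$ and $\{a,b,c\}$. Your level argument (vertices of $P_x-x$ are at distance $<k$ from $s$, and vertices of $Q_x-x$ are at distance $>k$) makes explicit the disjointness of the two branch sets, which the paper leaves implicit; the minimal-subtree detour is unnecessary, since $\bigcup_x V(P_x)\setminus\{a,b,c\}$ is already connected through $s$.
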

\begin{proof}
	Assume that the lemma is not true and let $k$ be the minimum integer such that there exist vertices $a,b,c \in \Interval{s}{t}$ such that $1 \leq k < \dist{s}{t}$ and $\{ a, b, c \} \subseteq V(\Intervalpath{s}{t})\cap \Sphere{k}{s}$. For $x\in \{a,b,c\}$, let $P_x$ and $P'_x$ denote a $(s,x)$-isometric path and $(t,x)$-isometric path, respectively. For $x\in \{a,b,c\}$, let $e_x$ and $e'_x$ denote the edge incident to $x$ in $P_x$ and $P'_x$, respectively. Let $\mathcal{Q}=\{P_a, P'_a, P_b, P'_b, P_c, P'_c\}$. Now delete from $\Intervalpath{s}{t}$ all vertices and edges that are not present in any paths in $\mathcal{Q}$. For $x\in \{a,b,c\}$, contract all edges in $E(P_x) \setminus \{e_x\}$ to $s$, and contract all edges in $E(P'_x)\setminus \{e'_x\}$ to $t$. The resulting graph is a $K_{2,3}$ with $\{s, t \}$ and $\{ a,b,c \}$ as its partite sets, a contradiction. 
\end{proof}

  \begin{lemma}\label{lem:3edge}
	For two vertices $s,t$ and an integer $i$, we define $S_i \coloneq \{xy\in E(G)\colon x\in V(\Intervalpath{s}{t})\cap \Sphere{i}{s}, y\in V(\Intervalpath{s}{t})\cap \Sphere{i+1}{s}\}$. Then,  $\cardinal{S_i}\leq 3$. 
\end{lemma}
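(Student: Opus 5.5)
By \Cref{lem:2in1level}, we have $\cardinal{V(\Intervalpath{s}{t})\cap \Sphere{k}{s}}\leq 2$ for every $1\leq k<\dist{s}{t}$. The levels $k=0$ and $k=\dist{s}{t}$ of the interval are $\{s\}$ and $\{t\}$ respectively, and for $k>\dist{s}{t}$ the level is empty. So both endpoint sets of $S_i$ have size at most $2$, and every edge of $S_i$ joins a vertex of the first set to a vertex of the second. Hence $\cardinal{S_i}\leq 4$, with equality only if both levels have exactly two vertices and all four cross edges are present. If either level has at most one vertex, then $\cardinal{S_i}\leq 2$ and we are done.

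It therefore suffices to rule out the following configuration, and I would argue by contradiction. Suppose $a,b\in \Interval{s}{t}\cap\Sphere{i}{s}$ and $c,d\in \Interval{s}{t}\cap \Sphere{i+1}{s}$ are four distinct vertices, and $ac,ad,bc,bd\in E(G)$. Since $a\neq b$ lie in the same level, that level is not $\{s\}$, so $i\geq 1$.

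I would build a $K_{2,3}$ minor with partite sets $\{a,b\}$ and $\{c,d,X\}$. Take an $(s,a)$-isometric path $P_a$ and an $(s,b)$-isometric path $P_b$, and set $X=(V(P_a)\cup V(P_b))\setminus\{a,b\}$.
\begin{enumerate*}[label=(\roman*)]
\item Every vertex of $P_a$ other than $a$ lies in some level $k<i$, and similarly for $P_b$. Hence $X$ avoids $a$, $b$, $c$ and $d$.
\item Since $i\geq 1$, $X$ contains $s$, and $X$ is connected because both subpaths pass through $s$.
\item The predecessor of $a$ on $P_a$ lies in $X$ and is adjacent to $a$; likewise $X$ contains a neighbour of $b$.
\end{enumerate*}
Contracting $X$ to a single vertex and keeping the edges $ac,ad,bc,bd$ then gives $K_{2,3}$ as a minor of $G$. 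This contradicts the hypothesis that $G$ is $K_{2,3}$-minor-free.

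The only delicate point is checking that the branch sets are pairwise disjoint, and this follows directly from the level (distance from $s$) structure as in (i). No use of $t$ or of \Cref{lem:same-block} seems needed.
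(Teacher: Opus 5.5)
Your proof is correct and follows the paper's argument. The paper also uses \Cref{lem:2in1level} to reduce to two vertices per level with all four cross edges $ac,ad,bc,bd$. It then builds a $K_{2,3}$ minor with partite sets $\{s,c,d\}$ and $\{a,b\}$ from $(s,a)$- and $(s,b)$-isometric paths $P_a,P_b$. Your branch set $X=(V(P_a)\cup V(P_b))\setminus\{a,b\}$ is exactly the paper's contracted vertex $s$. Your level-based check of disjointness and connectivity, and your handling of the boundary cases $i=0$ and $i\geq \dist{s}{t}-1$, only make explicit what the paper leaves implicit.
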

\begin{proof}
	By \Cref{lem:2in1level}, we only need to analyse the case when both $V(\Intervalpath{s}{t})\cap \Sphere{i}{s}$ and $V(\Intervalpath{s}{t})\cap \Sphere{i+1}{s}$ have two vertices each. Let $V(\Intervalpath{s}{t})\cap \Sphere{i}{s} = \{ a,b \}$ and $V(\Intervalpath{s}{t})\cap \Sphere{i+1}{s}= \{c,d \}$. Suppose $\cardinal{S_i}=4$ and $ S_i = \{ ac, ad, bc, bd\}\subseteq E(G)$. For $x\in \{a,b\}$, let $P_x$ be an $(s,x)$-isometric path in $G$. Clearly, the subgraph formed by $S_i \cup E(P_a)\cup E(P_b)$ contains a $K_{2,3}$-minor with $\{ s, c, d \}$ and $\{a, b\}$ as partite sets, a contradiction.
\end{proof}

For two vertices $s,t\in V(G)$, if $\Intervalpath{s}{t}$ does not contain any hole, then there is a unique $(s,t)$-isometric path in $G$. In the following lemma, we shall characterise the interaction between a hole of $\Intervalpath{s}{t}$ and an $(s,t)$-isometric path in $G$.
\begin{lemma}

    Let $s,t\in V(G)$ and $C$ be a hole of $H\coloneqq \Intervalpath{s}{t}$ that does not contain $s$ or $t$ then, $V(C)$ is an $(s,t)$-cutset of $H$.
\end{lemma}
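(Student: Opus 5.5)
Assume for contradiction that $V(C)$ is not an $(s,t)$-cutset of $H$, so $H-V(C)$ contains an $(s,t)$-path $Q$. The plan is to turn $Q$ together with $C$ into a $K_{2,3}$-minor, contradicting that $G$ (and hence its subgraph $H$) is $K_{2,3}$-minor-free. Along the way I use the layer structure of $H$: every vertex $w\in V(H)$ satisfies $\dist{s}{w}+\dist{w}{t}=\dist{s}{t}$, and every edge of $H$ joins consecutive levels $\Sphere{i}{s}$ and $\Sphere{i+1}{s}$.

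First I locate two special vertices on $C$. Let $a$ be a vertex of $C$ minimising $\dist{s}{\cdot}$, and $b$ a vertex of $C$ maximising it. Since $C$ is a hole with edges only between consecutive levels, $C$ splits into two internally disjoint $(a,b)$-paths $C_1,C_2$, each of length at least $1$; because $C$ has length at least $4$, at least one of them has an internal vertex. I would argue that $a\neq b$ and that both $C_1$ and $C_2$ have internal vertices. Indeed, if $C_1$ were the single edge $ab$, its two ends would lie on consecutive levels, and the other arc would have to climb from the level of $a$ to that of $b$ while staying within those levels. This would force two $C$-vertices on a level next to a third, and an application of \Cref{lem:2in1level} (together with the monotonicity of the arcs) would give a contradiction.

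Next I build the three branch paths. Take an $(s,a)$-isometric path $P_a$ and a $(b,t)$-isometric path $P_b$ inside $H$; these avoid the other vertices of $C$ because all vertices of $P_a$ lie at levels below $a$, and all vertices of $P_b$ lie at levels above $b$, whereas $C$ has no vertex outside the level range $[\dist{s}{a},\dist{s}{b}]$ other than at its extremes. Then $P_a\cup C_1\cup C_2\cup P_b\cup Q$ should give a $K_{2,3}$-minor with partite sets $\{a,b\}$ and three branch sets: an internal vertex of $C_1$, an internal vertex of $C_2$, and $Q$ joined through $P_a$ and $P_b$. To achieve this I contract $P_a$ into $a$ and $P_b$ into $b$, so that $s$ is identified with $a$ and $t$ with $b$. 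After these contractions $Q$ becomes a third $(a,b)$-path, internally disjoint from $C_1$ and $C_2$ since $Q$ avoids $V(C)$.

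The main obstacle is that $Q$ may intersect $P_a$ or $P_b$. I would handle this by taking the last vertex $x$ of $Q$ on $P_a$ and the first vertex $y$ after it on $P_b$, and then using the subpath $Q[x,y]$. Since $x,y\notin V(C)$, the segments $P_a[x,a]$ and $P_b[b,y]$ meet $C$ only at $a$ and $b$ respectively. It then remains to check that $Q[x,y]$, $P_a[x,a]$ and $P_b[b,y]$ are pairwise internally disjoint, which holds by the choice of $x$ and $y$. This yields three internally disjoint $(a,b)$-paths, namely $C_1$, $C_2$ and $P_a[a,x]\,Q[x,y]\,P_b[y,b]$, each with an internal vertex, which is a subdivision of $K_{2,3}$. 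The degenerate case $a=s$ is excluded because $C$ contains neither $s$ nor $t$, so $x$ and $y$ are well defined. The step requiring the most care is the claim that $P_a$ and $P_b$ avoid $C$ except at $a$ and $b$; I would secure it by choosing $a$ and $b$ as level-extremal vertices of $C$, so that all other vertices of $C$ lie strictly between those levels.
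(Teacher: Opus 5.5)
Your proof is correct in outline but takes a different route from the paper, and one sub-step needs a different citation.

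\textbf{Comparison with the paper.} The paper argues by levels. The top-level vertex of $C$ has both $C$-neighbours one level lower, so $C$ contains two vertices $u,v$ on a common level $i$ with $1\le i<\dist{s}{t}$. By \Cref{lem:2in1level} these are the only vertices of $H$ on level $i$. Every edge of $H$ changes the level by exactly one, so every $(s,t)$-path in $H$ meets level $i$, hence meets $u$ or $v$. This is a two-line argument. It works for any cycle of $H$ avoiding $s,t$, and it shows that two vertices of $C$ already separate $s$ from $t$ in $H$.

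Your direct construction of three internally disjoint $(a,b)$-paths is longer. However, it never uses $Q\subseteq H$, so it proves the stronger statement that $V(C)$ separates $s$ from $t$ in all of $G$. Your treatment of $Q\cap P_a$ and $Q\cap P_b$ (last vertex $x$ on $P_a$, then the first later vertex $y$ on $P_b$) is correct. So is $P_a\cap V(C)=\{a\}$ and $P_b\cap V(C)=\{b\}$. For the latter you only need that $a$ is the unique vertex of $P_a$ on level $\dist{s}{a}$; you do not need the other vertices of $C$ to lie strictly between the extreme levels.

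\textbf{The step that needs repair.} Your claim that both arcs $C_1,C_2$ have an internal vertex does not follow from \Cref{lem:2in1level}, even with an appeal to monotonicity of the arcs. Suppose $C$ is a $4$-cycle $a\,x_1\,x_2\,b$ on exactly two consecutive levels, with $a,x_2$ on level $\alpha$ and $x_1,b$ on level $\alpha+1$. Each level then holds only two vertices, so \Cref{lem:2in1level} is not violated. Yet every minimum-level vertex is adjacent to every maximum-level vertex, so one arc is the single edge $ab$, and your three paths do not give a $K_{2,3}$-subdivision.

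This configuration is excluded by \Cref{lem:3edge}, since the four edges $ax_1$, $x_2x_1$, $x_2b$, $ab$ all join levels $\alpha$ and $\alpha+1$. Once it is excluded, an arc confined to two levels and of length at least $5$ would put three vertices on one level, which \Cref{lem:2in1level} does forbid. With that citation added, your argument is complete.
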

\begin{proof}
Since $C$ is a hole in $H$, there exist two vertices $u, v \in V(C)$ such that $\distG{H}{s}{u} = \distG{H}{s}{v} = i$ for some integer $i$. By \Cref{lem:2in1level}, $V(H)\cap \Sphere{i}{s} = \{ u,v \}$. Thus $H - \{ u,v \}$ is disconnected with two different connected components each containing $s$ and $t$ respectively. Thus, $H - V(C)$ is also disconnected with two different connected components.
\end{proof}

\begin{lemma}\label{lem:hole_path}
    Let $s,t\in V(G)$ and $C$ be a hole of $H\coloneqq \Intervalpath{s}{t}$ and let $P$ be an $(s,t)$-isometric path in $H$. Then $V(C) \cap V(P)$ induces a path.
\end{lemma}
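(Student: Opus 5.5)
The plan is to exploit the layered structure of $H$ with respect to $s$. For each vertex $x\in V(H)$ write $\ell(x)=\distG{H}{s}{x}$. Since every edge of $H$ lies on some $(s,t)$-isometric path, every edge joins consecutive levels. The path $P$ visits exactly one vertex of each level $0,1,\dots,\dist{s}{t}$, in increasing order.

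First I would pin down the shape of the hole $C$. Let $a$ and $b$ be vertices of $C$ of minimum and maximum level, respectively. I claim $C$ is the union of two internally disjoint $(a,b)$-paths $C_1$ and $C_2$, each strictly monotone in level.

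Suppose instead that $C$ has a local minimum $a'\neq a$ or a local maximum $b'\neq b$. Then some level would contain two non-adjacent vertices of $C$ lying on the two "sides" of $C$, together with a third vertex of $C$ at that level. Combined with \Cref{lem:2in1level} (at most two vertices per level), this forces a contradiction. Alternatively, a non-monotone cycle would contain three vertices on one level, or would yield a $K_{2,3}$-minor as in \Cref{lem:3edge}.

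It follows that the levels strictly between $\ell(a)$ and $\ell(b)$ each contain exactly two vertices of $H$, one from $C_1$ and one from $C_2$. Moreover, $a$ and $b$ are the unique vertices of $H$ at their own levels, since a second vertex at level $\ell(a)$ would make three vertices at level $\ell(a)+1$ impossible to reconcile.

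Next, intersect with $P$. The path $P$ meets each level at most once. At a level strictly between $\ell(a)$ and $\ell(b)$, $P$ passes through either the $C_1$-vertex or the $C_2$-vertex, since these are the only vertices of $H$ at that level. Hence $V(P)\cap V(C)$ contains exactly one vertex at each such level, and it contains $a$ and $b$ because they are cut-vertices of $H$ on their levels (every $(s,t)$-path in $H$ passes through them).

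It remains to show that $P$ cannot switch between $C_1$ and $C_2$ in the middle. The set $V(P)\cap V(C)$ always has exactly one vertex per level in $[\ell(a),\ell(b)]$. For it to induce a path in $H$, consecutive chosen vertices must be adjacent. I would establish this directly: if $P$ uses $x\in C_1$ at level $i$ and $y\in C_2$ at level $i+1$, then $xy$ is an edge of $P$ and hence a chord of $C$, unless $\{x,y\}$ is one of the cycle edges at $a$ or $b$. This contradicts $C$ being chordless. Therefore $P$ follows $C_1$ or $C_2$ entirely between $a$ and $b$, and $V(P)\cap V(C)=V(C_1)$ or $V(C_2)$, which induces a path because $C$ is a hole. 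As an alternative packaging, consecutive chosen vertices are adjacent in $P$, so the set induces a path provided no extra edges occur, which is again guaranteed by chordlessness.

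I expect the main obstacle to be the first step: rigorously showing that $C$ is "bi-monotone", i.e. has a unique minimum and a unique maximum level. I would argue by contradiction. A second local extremum yields two disjoint monotone arcs of $C$ crossing a common level range. Together with the other side, this produces three vertices on one level, contradicting \Cref{lem:2in1level}. If that counting fails in an edge case (for example, extremal vertices on adjacent levels), I would fall back on building a $K_{2,3}$-minor using $s$- and $t$-isometric paths, exactly as in the proofs of \Cref{lem:2in1level} and \Cref{lem:3edge}.
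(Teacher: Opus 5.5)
Your two-arc strategy with the chord argument can be made to work, and the bi-monotone shape of $C$ does follow from \Cref{lem:2in1level}, with the remaining $4$-cycle case excluded by \Cref{lem:3edge}, as you anticipate. But one step is false. The vertices $a$ and $b$ need not be the only vertices of $H$ on their levels, so they need not lie on every $(s,t)$-isometric path, and your conclusion $V(P)\cap V(C)\in\{V(C_1),V(C_2)\}$ fails. Your justification does not go through: a second vertex $u$ on level $\ell(a)$ only needs a neighbour on level $\ell(a)+1$, and it can simply be adjacent to one of the two vertices of $C$ there. \Cref{lem:3edge} allows three edges between these two levels.

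Concretely, take the outerplanar (hence $K_{2,3}$-minor-free) graph on $s,a,u,l,r,b$ with edges $sa,su,al,ar,ur,lb,rb$, and let $t=b$. Then $\dist{s}{t}=3$, every edge lies in $\Intervalpath{s}{t}$, and the $4$-cycle $(a,l,b,r)$ is a hole whose unique lowest vertex is $a$. Yet $u$ is also on level $1$, and the isometric path $s,u,r,b$ meets this hole in $\{r,b\}$, which is neither arc. Symmetrically, the hole $(s,a,r,u)$ has top vertex $r$, which shares level $2$ with $l$, and $s,a,l,b$ meets it in $\{s,a\}$. This is exactly the kind of subpath starting at $r_1$ that the paper later has to allow in \Cref{lem:4-paths}.

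The repair is easy: drop the cut-vertex claim. Your argument still shows that $P$ has exactly one vertex on each level strictly between $\ell(a)$ and $\ell(b)$. The chord argument puts all of these vertices on the same arc, say $C_2$. Then $V(P)\cap V(C)$ is the interior of $C_2$, together with $a$ if $a\in V(P)$ and $b$ if $b\in V(P)$. Since $a$ and $b$ are adjacent on $C$ to the two ends of that interior, this set is always a proper contiguous subarc of $C$. It therefore induces a path because $C$ is chordless.

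The paper takes a different route and never builds the two-arc structure here. It supposes $V(P)\cap V(C)$ is disconnected, so some vertex $v_b$ of $P$ lying between two pieces is outside $C$. Since $C$ has vertices on levels both below and above $\ell(v_b)$, each of the two arcs of $C$ joining them has a vertex on level $\ell(v_b)$. That gives three vertices of $H$ on one level, contradicting \Cref{lem:2in1level}. Connectedness plus chordlessness then gives an induced path. This is shorter and sidesteps both the bi-monotonicity argument and the endpoint issue that tripped you up. Your route gives more structure, which the paper only establishes afterwards in \Cref{lem:4-paths}.
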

\begin{proof}
    As $V(C)$ is a $(s,t)$-cutset of $H$, all $(s,t)$-isometric paths in $G$ intersect $V(C)$. Thus for any $(s,t)$-isometric path $P$, $V(P)\cap V(C) \neq \emptyset$. If the induced subgraph on $V(P)\cap V(C)$ is connected, then the lemma is true.

Let the vertices of an $(s,t)$-isometric path $P$ be naturally ordered as $v_0, v_1 \cdots, v_{\dist{s}{t}}$. Assume that the graph induced on $V(P) \cap V(C)$ has two connected components, say path $P_1 = v_i, \cdots , v_j$ and path $P_2 = v_{j+k}, \cdots , v_l$. Thus there exists an integer $j < b < j+k$ such that vertex $v_b \notin V(C)$. But by definition of a hole, there exist two vertices $x,y \in V(C) \setminus V(P)$ such that $\distG{H}{s}{x} = \distG{H}{s}{y} = \distG{H}{s}{v_b}$. This contradicts \Cref{lem:2in1level}. Thus $\InducedSub{H}{V(C) \cap V(P)}$ is a path.

\end{proof}


The above lemma implies that for any two vertices $s,t\in V(G)$, any $(s,t)$-isometric path in $G$ will contain vertices of the holes of $\Intervalpath{s}{t}$. Now we shall prove a natural ordering on the holes of $\Intervalpath{s}{t}$. For two sets $A\subseteq \Interval{s}{t}, B\subseteq \Interval{s}{t}$, we say $A\preceq B$ if for any $u\in A\setminus B$ and $v\in B\setminus A$, we have $\distG{G}{s}{u}<\distG{G}{s}{v}$. 

\begin{lemma}\label{lem:linear-order}
    Let $s,t\in V(G)$, the set of holes of $H\coloneqq \Intervalpath{s}{t}$ can be linearly ordered as $D_1,D_2,\ldots$ such that for any $i<j$ and an $(s,t)$-isometric path $P$ in $G$, $V(D_i)\cap V(P)\preceq V(D_j)\cap V(P)$.
\end{lemma}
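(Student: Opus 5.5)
Every vertex $x$ of $H$ has a level $\ell(x)\coloneqq \distG{G}{s}{x}$, and for every vertex of $\Interval{s}{t}$ this equals $\distG{H}{s}{x}$. So each hole $D$ of $H$ occupies an interval of levels $I(D)=[m(D),M(D)]$, where $m(D)=\min_{x\in V(D)}\ell(x)$ and $M(D)=\max_{x\in V(D)}\ell(x)$. The plan is to order the holes by $m(D)$, breaking ties by $M(D)$, and then show that two distinct holes can share at most one level, namely an endpoint of both intervals. Once that is established, the claim that $V(D_i)\cap V(P)\preceq V(D_j)\cap V(P)$ follows by a short counting argument.

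\textbf{Step 1: structure of a single hole.} Every edge of $H$ joins consecutive levels. Hence a hole $D$ splits into two vertex-disjoint level-monotone paths from its unique lowest vertex at level $m(D)$ to its unique highest vertex at level $M(D)$. The extremes are unique: two vertices at level $m(D)$ on $D$, together with their neighbours on $D$ one level up, would already use more than two vertices at that level, contradicting \Cref{lem:2in1level}. The same argument applies at level $M(D)$. Consequently, every level strictly inside $I(D)$ contains exactly two vertices of $D$. By \Cref{lem:2in1level}, these are exactly all the vertices of $H$ at that level.

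\textbf{Step 2: holes overlap in at most one level.} Suppose $D\neq D'$ and $I(D)\cap I(D')$ contains an integer $k$ that is interior to $I(D)$. By Step 1, $D'$ must use only the two vertices at level $k$, and it uses both or one of them according to whether $k$ is interior to $I(D')$ or an endpoint of it. If $k$ is interior to both intervals, then on every common interior level the two holes use the same pair of vertices. Using \Cref{lem:3edge}, at most $3$ edges join consecutive levels. The two holes then coincide on the common range, and by minimality of holes (chordlessness) we get $D=D'$, a contradiction. The remaining case is that $k$ is interior to one interval and an endpoint of the other. Then the two holes, together with an $(s,\cdot)$ prefix and a $(\cdot,t)$ suffix, yield a $K_{2,3}$-minor, in the same style as the contraction argument of \Cref{lem:2in1level}: three internally disjoint routes appear between the branch vertex of one hole and a far vertex of the other. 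This case analysis is the main obstacle. The plan is to first try to reduce it to \Cref{lem:2in1level}, by exhibiting three vertices of $H$ on one level; only if that fails will I build the explicit minor. The conclusion of this step is that distinct holes satisfy $M(D_i)\le m(D_j)$ whenever $m(D_i)<m(D_j)$, with ties in $m$ impossible. So the ordering is indeed linear.

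\textbf{Step 3: conclude.} Fix $i<j$ and an $(s,t)$-isometric path $P$. Take $u\in (V(D_i)\cap V(P))\setminus V(D_j)$ and $v\in (V(D_j)\cap V(P))\setminus V(D_i)$. Then $\ell(u)\le M(D_i)\le m(D_j)\le \ell(v)$. Equality throughout would force $u$ and $v$ to be two distinct vertices of $P$ at the same level, which is impossible since $P$ is isometric and meets each level exactly once. Hence $\ell(u)<\ell(v)$, which is the required relation $\preceq$.
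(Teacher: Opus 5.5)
\textbf{The gap is in Step 2.} You claim that in the ``remaining case'' a $K_{2,3}$-minor appears: some level is interior to one hole's range and is an endpoint of the other's. In fact this is exactly how two consecutive holes inside one block of $H$ meet, and no minor arises. Let $G$ be the $6$-cycle $s\,a_1\,a_2\,t\,b_2\,b_1\,s$ together with the chord $a_1b_2$. This graph is outerplanar, hence $K_{2,3}$-minor-free. We have $\dist{s}{t}=3$ and every edge lies on an $(s,t)$-isometric path, so $H=G$, with $\ell(s)=0$, $\ell(a_1)=\ell(b_1)=1$, $\ell(a_2)=\ell(b_2)=2$ and $\ell(t)=3$. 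The holes are $D=s\,a_1\,b_2\,b_1$ with $I(D)=[0,2]$, and $D'=a_1\,a_2\,t\,b_2$ with $I(D')=[1,3]$. Level $1$ is interior to $I(D)$ and equals $m(D')$, yet there is no minor, and $M(D)=2>1=m(D')$. So the conclusion ``$M(D_i)\le m(D_j)$'' of Step 2 is false, and the chain $\ell(u)\le M(D_i)\le m(D_j)\le \ell(v)$ in Step 3 is not available. This is not a corner case. In the proof of \Cref{lem:4-paths}, an isometric path may enter a hole at $r_1$ from the other vertex at level $\alpha$, which means the preceding hole shares the edge $ar_1$ with it. Likewise, \Cref{lem:path-correspond} explicitly allows consecutive holes to share an edge.

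\textbf{How to repair it.} The statement itself survives in this example: $V(D)\setminus V(D')=\{s,b_1\}$ lies at levels $0,1$ and $V(D')\setminus V(D)=\{a_2,t\}$ lies at levels $2,3$. Your level-interval approach can be kept if you prove the correct dichotomy instead. For distinct holes with $m(D_i)<m(D_j)$, either $M(D_i)\le m(D_j)$, or $I(D_i)\cap I(D_j)=\{k,k+1\}$ where $k=m(D_j)$ and $k+1=M(D_i)$. In the second case, the bottom vertex of $D_j$ and the top vertex of $D_i$ are joined by an edge lying in both holes.
\begin{itemize}
\item The tool here is local structure plus chordlessness, not a minor. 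Between two consecutive levels that each carry two vertices, the edges of $H$ form a perfect matching plus at most one diagonal, by \Cref{lem:3edge}.
\item A hole for which both levels are interior must use that matching.
\item If you follow two distinct holes level by level through a shared gap, any pattern other than the one above eventually forces a chord, a fourth edge between two consecutive levels, or $D_i=D_j$.
\item In the second alternative, Step 1's uniqueness of extremes finishes the argument. The only vertex of $D_i$ at level $k+1$ lies in $D_j$, and the only vertex of $D_j$ at level $k$ lies in $D_i$. Hence $\ell(u)\le k<k+1\le \ell(v)$ for all $u\in V(D_i)\setminus V(D_j)$ and $v\in V(D_j)\setminus V(D_i)$, without even using $P$.
\end{itemize}

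\textbf{A smaller fix in Step 1.} Uniqueness of the lowest vertex needs \Cref{lem:3edge}, not \Cref{lem:2in1level}. Two lowest vertices may share both upper neighbours, forming a $4$-cycle with only two vertices per level but four edges between those two levels.
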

\begin{proof}
    Since any hole of $H$ lies in some block, first, we will prove a natural ordering on the blocks of $H$. First we show that the blocks in $H\coloneqq \Intervalpath{s}{t}$ can be ordered as $B_1,B_2,\ldots$ such that for $i<j$, $V(B_i)\cap V(P)\preceq V(B_j)\cap V(P)$.

    To prove this,
	 let $a=\distG{H}{s}{t}$ and $I\subseteq \{0,1,\ldots,a\}$ be a maximal set such that for each $i\in I$, $\cardinal{N_i(s) \cap V(H)}=1$. Observe that $\{0,a\}\subseteq I$. Let the elements of $I$ be naturally ordered as $q_1, q_2, \ldots, q_p$ where $p=\cardinal{I}$. For $b\in [p-1]$, let $S_b$ denote the set of vertices $\{w\colon q_b \leq \distG{H}{s}{w} \leq q_{b+1} \}$. By \Cref{lem:same-block}, for $b\in [p-1]$, the vertices of $S_b$ induce a block $C_b$ in $H$. Clearly, for $1\leq i<j\leq p-1$, $V(C_i)\cap V(P) \preceq V(C_j)\cap V(P)$. 

    \medskip \noindent Now we will focus on holes contained in some block of $H$. We prove the following claim.

    \begin{claim}
        Let $B$ be a block of $H$. The holes of $B$ can be ordered as $C_1, C_2,\ldots$ such that for $i<j$, $V(C_i)\cap V(P)\preceq V(C_j)\cap V(P)$.
    \end{claim} 

    \begin{claimproof}
        From \Cref{lem:hole_path}, $V(C_i)\cap V(P)$ and $V(C_j)\cap V(P)$ induce two paths say, $P_1$ and $P_2$ respectively. If $V(P_1) \not\subseteq V(P_2)$ and vice versa then we are done. Assume on the contrary, $V(P_1) \subseteq V(P_2)$. As $C_i$ is a hole in $H$, there exists an integer $k$ such that there exist two vertices $u,v \in N_k(s) \cap V(C_i)$ and let $u \in N_k(s) \cap V(P_1)$. Thus $u$ is also in $N_k(s) \cap V(P_2)$. But as $C_j$ and $C_i$ are holes of $H$, $v \notin V(C_j)$. By \Cref{lem:2in1level} and \Cref{lem:same-block}, there exists a vertex $w \in V(C_j) \cap N_k(s)$. But this contradicts \Cref{lem:2in1level}.
    \end{claimproof}  

    \medskip \noindent Let $A$ and $B$ be two consecutive blocks of $H$ written as induced graph on the vertices $\{w : q\leq \distG{H}{s}{w} \leq q'\}$ and $\{w : r \leq \distG{H}{s}{w} \leq r'\}$ respectively. Clearly, $q' \leq r$. Let $C_i$ be the last hole of $A$ and $C_j$ be the first hole of $B$. Thus every vertex of $C_i$ lie in $N_k(s)$ where $k \leq q'$. Also, every vertex of $C_j$ lie in $N_k(s)$ where $k \geq r$. Thus any vertex $u \in (V(C_i) \cap V(P) ) \setminus V(C_j)$ and
    $v \in (V(C_j) \cap V(P) ) \setminus V(C_i)$, $\distG{G}{s}{u} \leq q' \leq \distG{G}{s}{v}$. Since $P$ is an $(s,t)$-isometric path, $|N_{q'}(s)| = 1$. Thus $\distG{G}{s}{u} < \distG{G}{s}{v}$. Hence $V(C_i) \cap V(P) \preceq V(C_j) \cap V(P)$.
\end{proof}

\subsection{Construction of the Auxiliary Graph $\overrightarrow{H}_{s,t}$}\label{subsec:aux}

Let $s,t \in V(G)$ and let $H \coloneq \Intervalpath{s}{t}$. Let $S_H \coloneq \{ D_1, \cdots , D_k \}$ be the linear order of holes in $H$ as realized by \Cref{lem:linear-order}. We construct a directed graph $\overrightarrow{H}_{s,t}$ as follows.

  

    \medskip \noindent \textbf{Vertex set of $\overrightarrow{H}_{s,t}$:} For a hole $D_i \in S_H$ in $H$, let $\mathcal{D}_i$ be the collection of all subpaths of the hole $D_i$ that can be traced by some $(s,t)$-isometric path in $H$. Formally, $$\mathcal{D}_i \coloneq \{ P \subseteq D_i \colon  P = \overline{P} \cap D_i \text{ for some } (s,t)\text{-isometric path }\overline{P} \text{ in }H \}.$$ Let $V_i$ be a set of vertices such that each vertex in $V_i$ represents a path in $\mathcal{D}_i$. Thus, $V_i \coloneq \{ v_P \colon P \in \mathcal{D}_i \}$. Clearly, there is a bijection between $V_i$ and $\mathcal{D}_i$. Construct vertex set of the directed graph $\overrightarrow{H}_{s,t}$ as $V(\overrightarrow{H}_{s,t}) \coloneq \displaystyle \bigcup_{i \in [k]} V_i \cup \{ s' \} \cup \{t'\}$.

    Next we prove a lemma that helps us show that $V(\overrightarrow{H}_{s,t})$ has at most $O(n)$ many vertices.

\begin{lemma}\label{lem:4-paths}
    Let $S_H \coloneq \{ D_1, \cdots , D_k \}$ be the linear order of holes in $H$. For an $i \in [k]$, let $\mathcal{D}_i \coloneq \{ P \subseteq D_i \colon  P = \overline{P} \cap D_i \text{ for some } (s,t)\text{-isometric path }\overline{P} \text{ in }H \}.$ Then $|\mathcal{D}_i| \leq 5$.
\end{lemma}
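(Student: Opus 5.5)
Fix a hole $D\coloneqq D_i$ of $H$. By \Cref{lem:hole_path}, for every $(s,t)$-isometric path $\overline{P}$ in $H$, $V(\overline{P})\cap V(D)$ induces a path of $D$. So $\mathcal{D}_i$ is a family of subpaths of $D$, and it suffices to bound how many such subpaths can arise. The plan is to determine the shape of $D$ with respect to the distance levels $\Sphere{k}{s}$, and then show that only a bounded number of entry and exit choices exist.

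First we pin down the structure of $D$. Let $k_{\min}$ and $k_{\max}$ be the minimum and maximum of $\distG{H}{s}{w}$ over $w\in V(D)$. Every edge of $H$ joins consecutive levels, since $H$ is a union of $(s,t)$-isometric paths. Hence $D$ consists of two internally disjoint monotone paths $L$ and $R$ joining a bottom part at level $k_{\min}$ to a top part at level $k_{\max}$.

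By \Cref{lem:2in1level}, each level contains at most two vertices of $H$. So every intermediate level $k_{\min}<k<k_{\max}$ contains exactly one vertex of $L$ and one of $R$, and no other vertex of $H$. The bottom level holds either a single vertex $b$ (where $L$ and $R$ meet) or two vertices $b_L,b_R$. In the latter case $b_L b_R$ cannot be an edge, because adjacent vertices of $H$ lie on different levels; so $D$ must close up at a single vertex or be otherwise constrained. The top level behaves the same way. We would make this precise, concluding that $D$ has at most one vertex at each of levels $k_{\min}$ and $k_{\max}$, or at most two there.

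Next we show that an isometric path $\overline{P}$ meets $D$ in one of only a few patterns. Because $\overline{P}$ visits exactly one vertex per level, $V(\overline{P})\cap V(D)$ is a subpath that is monotone in level. Since the intermediate levels contain only $L$- and $R$-vertices, on any level strictly between $k_{\min}$ and $k_{\max}$ the path $\overline{P}$ is forced to use a vertex of $D$. Moreover, a monotone path inside $D$ cannot switch between $L$ and $R$ except at the bottom or top vertex. Hence $\overline{P}\cap D$ is determined by the side ($L$ or $R$) it uses, together with whether it contains the bottom vertex or vertices and the top vertex or vertices. When the bottom (resp. top) level has two vertices of $D$, the path $\overline{P}$ may pass through either one of them.

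Enumerating the cases gives the possible intersections: all of $L$, all of $R$, and the variants where $\overline{P}$ enters or leaves $D$ at a non-extremal point. Such variants are only possible when the bottom or top level of $D$ is shared with another block or hole, and \Cref{lem:3edge} limits the edges between consecutive levels to at most three. This is where the count of $5$ would come from: $2$ choices for the side, refined by at most $3$ edge choices at the junction levels, with overlaps removed.

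The main obstacle is this case analysis at the extreme levels. Two holes can share vertices there, for instance when two holes share an edge or a vertex, and in that case $\overline{P}$ may enter $D$ midway via a vertex outside $D$. To control this I would first use \Cref{lem:3edge} at the levels $k_{\min}$ and $k_{\max}$. Then I would use the absence of a $K_{2,3}$-minor, arguing as in the proof of \Cref{lem:2in1level}, to rule out configurations giving a sixth distinct intersection pattern.
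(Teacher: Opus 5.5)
Your skeleton matches the paper's proof. Every $H$-vertex on a level strictly between $k_{\min}$ and $k_{\max}$ lies on $D$, because the $L$- and $R$-vertices already fill the two slots allowed by \Cref{lem:2in1level}. So $\overline{P}\cap D$ is determined by its side and by whether it contains the bottom and top vertices. However, the two steps that actually produce the bound are left open, so there is a genuine gap.

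\textbf{Uniqueness of the extreme vertices.} You never settle whether the extreme levels contain one or two vertices of $D$ (``at most one \dots\ or at most two''). Your description of $D$ as two monotone sides $L,R$ depends on this. It follows directly from \Cref{lem:3edge}. If $a\neq a'$ both lay on level $k_{\min}$, each would have both of its $D$-neighbours on level $k_{\min}+1$. That level holds at most two vertices of $H$, so there would be four edges between the two levels. Hence $D$ has a unique bottom vertex $a$, and symmetrically a unique top vertex $b$.

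\textbf{Getting from $8$ patterns to $5$.} This is the more serious gap. Even with $a$ and $b$ unique, your description allows $2\cdot 2\cdot 2=8$ patterns: the side, whether $a$ is included, and whether $b$ is included. Nothing in your proposal cuts this down to $5$. The phrase ``2 choices for the side, refined by at most 3 edge choices, with overlaps removed'' is not a count, and the closing appeal to an unspecified $K_{2,3}$-minor argument is a placeholder.

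The missing observation is as follows. Write $l_1,r_1$ for the neighbours of $a$ on $L$ and $R$. If $\overline{P}$ avoids $a$, its vertex on level $k_{\min}$ is some $u\neq a$. By \Cref{lem:2in1level}, $u$ is the only other $H$-vertex on that level, and $\overline{P}$ enters $D$ through the edge $ul_1$ or $ur_1$. If both sides could be entered this way, the four edges $al_1,ar_1,ul_1,ur_1$ would contradict \Cref{lem:3edge}. Hence at most one side can omit $a$, and symmetrically at most one side can omit $b$.

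The count is then at most $2\cdot 2+1\cdot 1=5$ when the same side can omit both endpoints, and $4$ otherwise. This gives exactly the paper's list: $L$, $R$, $(a,r_{h-1})$, $(r_1,b)$, $(r_1,r_{h-1})$. The argument is purely local to levels $k_{\min},k_{\min}+1$ and $k_{\max}-1,k_{\max}$. Your discussion of holes sharing vertices or edges with neighbouring holes is not needed.
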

\begin{proof}
    Fix a hole $D = D_j$ in $H$ for some $j \in [k]$. Observe that for every edge $uv$ in $D$, $| \distG{H}{s}{u} - \distG{H}{s}{v}| = 1$. Also every vertex $v$ in hole $D$ has exactly two neighbors in $D$. Let $a, a' \in V(D)$ such that $\dist{s}{a} = \dist{s}{a'} = \displaystyle \min_{v \in V(D)} \dist{s}{v} = \alpha$. Thus both $a$ and $a'$ have neighbors in $D$ that lie in $N_{\alpha + 1}(s) \cap \Interval{s}{t} $. But \Cref{lem:3edge} forces a contradiction. Thus there exists a unique vertex $a \in V(D)$ such that $\dist{s}{a} = \displaystyle \min_{v \in V(D)} \dist{s}{v}$. Similarly, there exists a unique vertex $b \in V(D)$ such that $\dist{s}{b} = \displaystyle \max_{v \in V(D)} \dist{s}{v} = \beta$. Thus by \Cref{lem:2in1level} and \Cref{lem:same-block}, all vertices $v$ in $H$ are in hole $D$ if $ \alpha < \dist{s}{v} < \beta$.

    Observe that since $D$ is a hole, there are exactly two edge-disjoint $(a,b)$-isometric paths, say $L = l_0 \cdots l_h$ and $R = r_0 \cdots r_h$. Observe that $l_0 = r_0 = a$ and $l_h = r_h = b$ and $\beta - \alpha = h$. Also, $\dist{s}{l_i} = \dist{s}{r_i} = \alpha+i$ for $i \in \{ 0, \cdots ,h \}$. Let $P = \overline{P} \cap D$ be a subpath of $D$ for some $(s,t)$-isometric path $\overline{P}$. By \Cref{lem:hole_path}, $P$ is a subpath of either $L$ or $R$. 
    
    Let $P = p_1, \cdots, p_l$. Suppose $p_1 = l_i$ of $L$ where $i \geq 1$. Since $\dist{s}{l_i} = \alpha + i$, the predecessor of $l_i$ in $\overline{P}$, say $x$ lies in $N_{\alpha + i - 1}(s) \cap \Interval{s}{t}$. If $i > 1$, $\dist{s}{x} > \alpha $. Thus $x$ is in $D$. This contradicts the choice of $l_i$ as the first vertex of $P$. Therefore if $P \subseteq L$ then $i \in \{0,1 \}$. Thus $p_1 = l_0$ or $p_1 = l_1$. By similar arguments, if $P \subseteq R$ then $i \in \{0,1 \}$. Thus $p_1 = r_0$ or $p_1 = r_1$. Analogously, if $P \subseteq L$ then $p_l = l_h$ or $p_l = l_{h-1}$ and if $P \subseteq R$ then $p_l = r_h$ or $p_l = r_{h-1}$.

   Consider the vertices $a, l_1, r_1, u$ such that $\dist{s}{u} = \alpha$. By \Cref{lem:3edge}, $u$ can not have edge with both $l_1$ and $r_1$. Consider the possible first vertices of $P$. We claim that either of $r_1$ and $l_1$ is possible. If some $(s,t)$-isometric path enters $D$ through $l_1$ then $l_1$ has a predecessor $u$ in $N_{\alpha}(s) \cap \Interval{s}{t}$. Similarly, if some $(s,t)$-isometric path enters $D$ through $r_1$ then $r_1$ has a predecessor $u'$ in $N_{\alpha}(s) \cap \Interval{s}{t}$. By \Cref{lem:2in1level}, $u = u'$. But this contradicts \Cref{lem:3edge}. Thus at most one of $\{ l_1, r_1 \}$ can be the first vertex of a subpath $P$. By similar arguments at most one of $\{ l_{h-1}, r_{h-1} \}$ can be the last vertex of $P$.

   Thus maximal size set of subpaths can contain two $(a,b)$-isometric edge-disjoint paths, $(a,r_{h-1})$-isometric path, $(r_1,b)$-isometric path and $(r_1,r_{h-1})$-isometric path. Thus $|\mathcal{D}_j| \leq 5$.

\end{proof}

The bound in \Cref{lem:4-paths} is tight. \Cref{fig:tightbound} shows an example of the tight bound. From \Cref{lem:4-paths} it is clear that the size of $V(\overrightarrow{H}_{s,t})$ is $O(n)$.

\begin{figure}[ht]
    \centering
    \includegraphics[width=\linewidth]{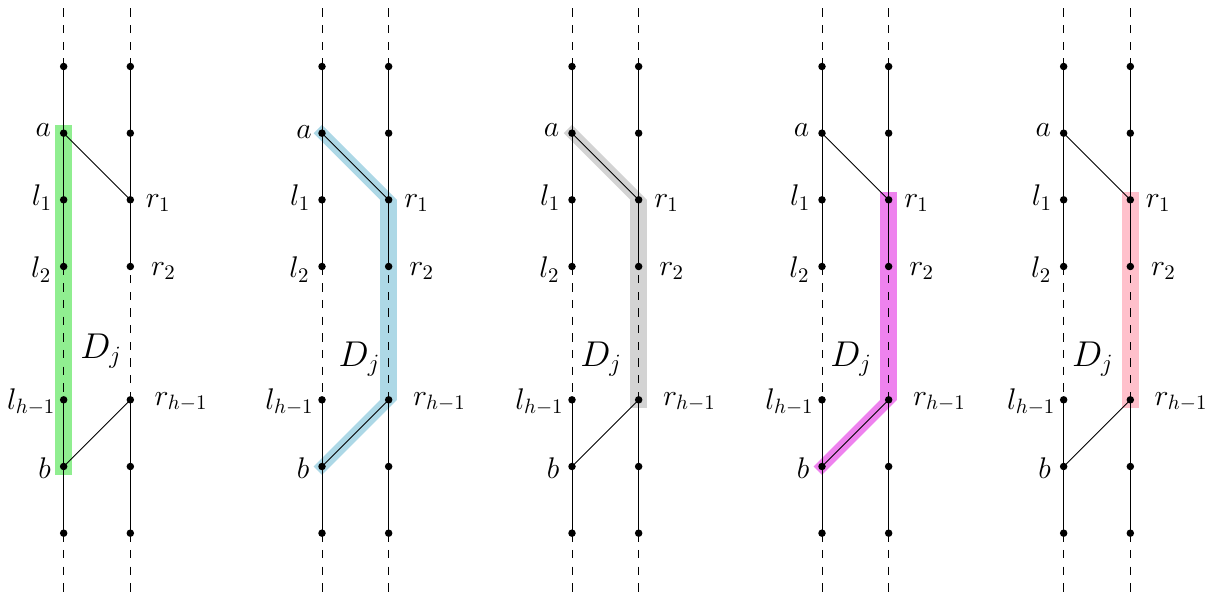}
    \caption{For a hole $D_j$ this figure lists out all possible subpaths that can be traced. Consider $\mathcal{D}_j = \{ L, R, (a,r_{h-1}) \text{-isometric path}, (r_1,b) \text{-isometric path}, (r_1,r_{h-1}) \text{-isometric path} \} $ where $L$ and $R$ are two $(a,b)$-isometric and edge-disjoint paths.}
    \label{fig:tightbound}
\end{figure}

\medskip \noindent \textbf{Edge set of $\overrightarrow{H}_{s,t}$:} Let $D_i$ and $ D_{i+1}$ be two consecutive holes in $S_H$. Assume that $V(D_i) \cap V(D_{i+1}) \neq \emptyset$ and let $v_P \in V_i$ and $v_Q \in V_{i+1}$. We add a directed edge from $v_P$ to $v_Q$, say $\overrightarrow{v_Pv_Q}$ if $P \cup Q$ is a subpath of some $(s,t)$-isometric path. If $V(D_i) \cap V(D_{i+1}) = \emptyset$, add a directed edge from $v_P$ to $v_Q$ for all $v_P \in V_i$ and for all $v_Q \in V_{i+1}$. Add a directed edge $\overrightarrow{s'v_P}$ in $E(\overrightarrow{H}_{s,t})$ for each $v_P \in V_1$ and a directed edge $\overrightarrow{v_Qt'}$ in $E(\overrightarrow{H}_{s,t})$ for each $v_Q \in V_{k}$.


Let $v_P \in V_i$ and $v_Q \in V_{i+1}$ and let $e = \overrightarrow{v_Pv_Q} \in E(\overrightarrow{H}_{s,t})$ be an edge. By construction of $\overrightarrow{v_Pv_Q}$, $P$ and $Q$ lie on a common $(s,t)$-isometric path $W$. Let $W_{PQ}$ be the minimal subpath of $W$ that contains $P$ and $Q$. Define $\Phi(\overrightarrow{v_Pv_Q}) \coloneq V(W_{PQ})$, the vertex set of $W_{PQ}$.

Next, let $v_P \in V_1$ and let $e = \overrightarrow{s'v_P}$. Define $\Phi(\overrightarrow{s'v_P}) = V(P) \cup V(P')$ where $P'$ is the shortest path in $H$ connecting $D_1$ and $s$. Note that if $P'$ is a trivial path then $\Phi(\overrightarrow{s'v_P}) = V(P)$. Similarly, let $v_Q \in V_k$ and let $e = \overrightarrow{v_Qt'}$. Define $\Phi(\overrightarrow{v_Qt'}) = V(Q) \cup V(Q')$ where $Q'$ is the shortest path in $H$ connecting $D_k$ and $t$. Note that if $Q'$ is a trivial path then $\Phi(\overrightarrow{v_Qt'}) = V(Q)$. 

We use $\Phi$ to prove the following lemma. 

\begin{lemma}\label{lem:path-correspond}
    There is a bijection between $(s',t')$-directed paths in $\overrightarrow{H}_{s,t}$ and $(s,t)$-isometric paths in $\Intervalpath{s}{t}$.
\end{lemma}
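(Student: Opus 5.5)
We build explicit maps in both directions and show they are mutually inverse. Throughout, $H\coloneqq\Intervalpath{s}{t}$, the holes are $D_1,\ldots,D_k$ in the order of \Cref{lem:linear-order}, and $\mathcal{D}_i$, $V_i$ are as above. If $k=0$ there is a unique $(s,t)$-isometric path, and we treat this case separately: either $\overrightarrow{H}_{s,t}$ is read as the single edge $\overrightarrow{s't'}$, or the statement is taken for $k\ge 1$.

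\textbf{The forward map.} Let $\overline{P}$ be an $(s,t)$-isometric path. By \Cref{lem:hole_path} and the cutset lemma, $P_i\coloneqq \overline{P}\cap D_i$ is a nonempty path for each $i$, and $P_i\in\mathcal{D}_i$ by definition. Set
\[
\psi(\overline{P}) = s', v_{P_1}, v_{P_2}, \ldots, v_{P_k}, t'.
\]
We must check that $\psi(\overline{P})$ is a directed path. The edges $\overrightarrow{s'v_{P_1}}$ and $\overrightarrow{v_{P_k}t'}$ exist by construction. For $1\le i<k$, there are two cases. If $V(D_i)\cap V(D_{i+1})=\emptyset$, all edges between $V_i$ and $V_{i+1}$ were added. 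Otherwise $P_i\cup P_{i+1}$ lies on the common isometric path $\overline{P}$, so the edge was added. In addition, by \Cref{lem:linear-order} the sets $V(P_i)$ appear along $\overline{P}$ in increasing order of distance from $s$.

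\textbf{Injectivity.} Suppose $\psi(\overline{P})=\psi(\overline{P}')$, so the two paths agree on every hole. The vertices of $H$ outside $\bigcup_i V(D_i)$ form the hole-free stretches: the segment from $s$ to $D_1$, the segments between $D_i$ and $D_{i+1}$, and the segment from $D_k$ to $t$. Every block of $H$ that is not a hole-containing block is a bridge, since a $2$-connected block containing no hole would be a chordal interval piece, and \Cref{lem:2in1level} together with \Cref{lem:3edge} rule out triangles in an interval (triangles are impossible because consecutive levels differ by exactly one). So on each hole-free stretch the levels $N_j(s)\cap V(H)$ are singletons, and the path is forced there. This also makes $\Phi$ well defined, since the minimal subpath $W_{PQ}$ does not depend on the choice of $W$. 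It remains to handle the portion of a level inside a block but outside every hole; by \Cref{lem:same-block} and \Cref{lem:2in1level}, any level with two vertices lies on a hole. Hence $\overline{P}=\bigcup_i V(P_i)\cup\Phi(\text{edges})$ is determined by $\psi(\overline{P})$, and $\psi$ is injective.

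\textbf{Surjectivity.} This is the main obstacle. Given a directed path $s',v_{Q_1},\ldots,v_{Q_k},t'$, define
\[
\overline{P}=\bigcup_{e}\Phi(e),
\]
the union over the edges $e$ of the path. We must show this is one $(s,t)$-isometric path. Consecutive sets $\Phi(e)$ and $\Phi(e')$ share the subpath $Q_{i+1}$. Each $\Phi(e)$ is a subpath of some isometric path whose vertices occupy a contiguous range of levels, with exactly one vertex per level. The delicate point is that compatibility of consecutive pairs must imply global compatibility. We argue that $\overline{P}$ has exactly one vertex in each level $N_j(s)$ and that consecutive levels are joined by edges of the local witnesses, so it is a path of length $\dist{s}{t}$ from $s$ to $t$, hence isometric.

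The danger is the level-overlap between $D_i$ and $D_{i+1}$ when they share a vertex, or when three holes meet at nearby levels. Here we use \Cref{lem:2in1level} (at most two vertices per level) and the structure from \Cref{lem:4-paths}: a hole has a unique top vertex $a$ and a unique bottom vertex $b$, and all levels strictly between them lie in the hole. It follows that the level ranges of $D_i$ and $D_{i+2}$ are disjoint except possibly at an endpoint, so every level is governed by at most two consecutive holes, where the edge condition applies. Finally, $\psi$ applied to the reconstructed path returns $Q_i$ on each hole, because $\overline{P}\cap D_i=Q_i$: by the same level argument, no witness for an adjacent pair adds extra vertices of $D_i$. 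This gives the bijection.
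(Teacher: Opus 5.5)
Your forward map $\psi$ and its injectivity are fine. The surjectivity step breaks where you assert that the reconstructed path satisfies $\overline{P}\cap D_i=Q_i$ because ``no witness for an adjacent pair adds extra vertices of $D_i$''. Two consecutive holes in the same block share an edge. The edge rule only requires $P\cup Q$ to lie on \emph{some} isometric path; it does not require $P$ and $Q$ to be the traces of the \emph{same} path.

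Take $G$ to be the domino (the $2\times 3$ grid, which is outerplanar) with $s,t$ at opposite corners. Label the $6$-cycle $s,l_1,l_2,t,r_2,r_1$, with chord $l_1r_2$. Then $D_1=sl_1r_2r_1$ and $D_2=l_1l_2tr_2$, and the isometric paths are $X=sl_1l_2t$, $Y=sl_1r_2t$ and $Z=sr_1r_2t$.
\begin{itemize}
\item Since $(X\cap D_1)\cup(Y\cap D_2)=sl_1\cup l_1r_2t=Y$, the edge $\overrightarrow{v_{sl_1}v_{l_1r_2t}}$ is present.
\item Since $(Y\cap D_1)\cup(Z\cap D_2)=sl_1r_2\cup r_2t=Y$, the edge $\overrightarrow{v_{sl_1r_2}v_{r_2t}}$ is present.
\item The three directed paths $s',v_{sl_1},v_{l_1r_2t},t'$ and $s',v_{sl_1r_2},v_{l_1r_2t},t'$ and $s',v_{sl_1r_2},v_{r_2t},t'$ are distinct, yet all three reconstruct $Y$.
\item For the first of them, $Y\cap D_1=sl_1r_2\neq sl_1$, which contradicts your claim directly.
\end{itemize}
So $\overrightarrow{H}_{s,t}$ has at least five $(s',t')$-paths but only three isometric paths exist. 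With the edge rule as written the statement is false, and no argument can close this step. The paper's own proof only produces maps in both directions and never checks that they are mutually inverse, so it does not detect this either.

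The repair is to add $\overrightarrow{v_Pv_Q}$ for intersecting $D_i,D_{i+1}$ only when a single isometric path $W$ has $W\cap D_i=P$ and $W\cap D_{i+1}=Q$. Under that rule your level argument works, but it needs a different key fact than the one you state. Let $D$ be a hole with top $a$ at level $\alpha$ and bottom $b$ at level $\beta$. Then $D$ contains both vertices of every level strictly between $\alpha$ and $\beta$, and $W\cap D$ records whether $W$ uses $a$ and $b$. Hence the trace $W\cap D$ determines $W$ on all levels $\alpha,\ldots,\beta$. The level ranges then fit together as follows:
\begin{itemize}
\item Inside a block, consecutive holes have ranges overlapping in exactly two levels.
\item $D_i$ and $D_{i+2}$ may share one level, which $D_{i+1}$ also covers.
\item $D_i$ and $D_{i+3}$ are level-disjoint.
\end{itemize}
So pairwise consistency along the directed path yields a unique vertex per level, with consecutive vertices adjacent.

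Your claim that every level meets at most two consecutive holes is also false as stated. In the $2\times 4$ grid with $s,t$ at opposite corners, the middle level meets all three squares. This is harmless only because the middle square contains both vertices of that level.
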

\begin{proof}
    We prove the existence of a bijection from the set of $(s',t')$-directed paths in $\overrightarrow{H}_{s,t}$ to the set of $(s,t)$-isometric paths in  $\Intervalpath{s}{t}$. By \Cref{lem:hole_path} and \Cref{lem:linear-order}, $\Intervalpath{s}{t}$ is a chain of holes $D_1, \cdots, D_k$ and unique isometric paths.

   \begin{claim}
       Every $(s,t)$-isometric path in $\Intervalpath{s}{t}$ gives a directed $(s',t')$ path in $\overrightarrow{H}_{s,t}$.
   \end{claim} 

   \begin{claimproof}
       Let $W$ be an $(s,t)$-isometric path and for each $i$, let $P_i = W \cap D_i$. By \Cref{lem:hole_path} $P_i$ is a subpath of $D_i$. The definition of $\mathcal{D}_i$ gives $P_i \in \mathcal{D}_i$. Thus $v_{P_i} \in V_{i}$ is a vertex of $\overrightarrow{H}_{s,t}$. 

        If $D_i$ and $D_{i+1}$ are disjoint then the edge $\overrightarrow{v_{P_i}v_{P_{i+1}}}$ is present by construction. If $D_i$ and $D_{i+1}$ are not disjoint then they share either a vertex or an edge, otherwise it forces a $K_{2,3}$-minor. Let $V(D_i) \cap V(D_{i+1}) \coloneq S $. If $P_i \cap P_{i+1} = S$ then $P_i \cup P_{i+1}$ is a subpath of $W$. As $W$ is isometric the edge $\overrightarrow{v_{P_i}v_{P_{i+1}}}$ is present by construction and $\Phi(\overrightarrow{v_{P_i}v_{P_{i+1}}}) = V(P_i \cup P_{i+1})$.

        Edges $\overrightarrow{s'v_{P_1}}$ and $\overrightarrow{v_{P_k}t'}$ are present by construction. 
   \end{claimproof}

\begin{claim}
    Every directed $(s',t')$ path in $\overrightarrow{H}_{s,t}$ gives an $(s,t)$-isometric path in $\Intervalpath{s}{t}$.
\end{claim}
    \begin{claimproof}
         Conversely, take a directed $(s',t')$ path in $\overrightarrow{H}_{s,t}$ with edges $e_0, e_1, \cdots, e_{k}$. Let $e_0 = \overrightarrow{s'v_{P_1}}$ and $e_{k} = \overrightarrow{v_{P_k}t'}$. For each edge $e_i$, $i \in [k-1]$, $\Phi(e_i)$ induces a subpath of an $(s,t)$-isometric path. Let $P_i \in \mathcal{D}_i$ and $P_{i+1} \in \mathcal{D}_{i+1}$. If $D_i$ and $D_{i+1}$ intersect then $\Phi(e_i) = V(P_i) \cup V(P_{i+1})$. If $D_i$ and $D_{i+1}$ are disjoint then $\Phi(e_i)$ contains the unique path joining $D_i$ and $D_{i+1}$ along with $P_i$ and $P_{i+1}$. By construction, the subpaths induced by $\Phi(e_i)$ and $\Phi(e_{i+1})$ respectively share the common subpath $P_{i+1}$. By \Cref{lem:hole_path} and \Cref{lem:linear-order}, the union $W = \displaystyle \bigcup_{e_i} \Phi(e_i)$ induces a path from $s$ to $t$ in $\Intervalpath{s}{t}$. By \Cref{lem:linear-order} and \Cref{lem:2in1level} every edge in $W$ goes from $N_j(s)$ to $N_{j+1}(s)$ for some integer $j$. Thus $W$ induces an $(s,t)$-isometric path.
    \end{claimproof} 
    
    Hence, the lemma.
\end{proof}

In the next section we use $\Phi$ to weight the edges of $\overrightarrow{H}_{s,t}$.

\subsection{The Weighting scheme}

Let $s,t \in V(G)$ and let $H \coloneq \Intervalpath{s}{t}$. Let $\overrightarrow{H}_{s,t}$ be the directed graph constructed from $H$ by the rules given in the previous section. Let $S_H \coloneq \{ D_1, \cdots , D_k \}$ be the linear order of holes in $H$ as realized by \Cref{lem:linear-order}. 

For an integer $i \in [k-1]$, consider a vertex $v_P \in V_i$ and a vertex $v_Q \in V_{i+1}$. For the edge $e = \overrightarrow{v_Pv_Q}$ in $\overrightarrow{H}_{s,t}$ we assign two weights $w_1(e)$ and $w_2(e)$. 

Consider every vertex $u \in V(G)$ that is equidistant from $\Phi(e)$ and $\Interval{s}{t}$. We define $w_1(e)$ as the maximum distance from such a vertex to $\Phi(e)$. Formally, $$w_1(e) = \displaystyle \max \bigl \{ \distG{G}{u}{\Phi(e)} \colon u \in V(G), \distG{G}{u}{\Phi(e)} = \distG{G}{u}{\Interval{s}{t}} \bigr \}.$$

Next, let $O(e) \coloneq V(D_i \cup D_{i+1}) \setminus \Phi(e)$. Consider all vertices $u$ whose shortest paths to $\Phi(e)$ all intersect $O(e)$. We define $w_2(e)$ as the maximum distance from such a vertex $u$ to $\Phi(e)$. Note that, the vertices in the set $O(e)$ are also considered in weight $w_2$. Formally, $$w_2(e) = \displaystyle \max \bigl \{ \distG{G}{u}{\Phi(e)} \colon u \in V(G),\text{ every shortest path from }u \text{ to }\Phi(e) \text{ intersects }O(e) \bigr \}.$$

Finally, the weight of the edge $\overrightarrow{v_Pv_Q}$, denoted as $\weight{\overrightarrow{v_Pv_Q}}$ is a function of $w_1(\overrightarrow{v_Pv_Q})$ and $w_2(\overrightarrow{v_Pv_Q})$. Formally, $$\weight{\overrightarrow{v_Pv_Q}} \coloneq \max \{ w_1(e) , w_2(e) \}.$$

Next we assign weights to the edges $\overrightarrow{s'v_P}$ where $v_P \in V_1$, and $\overrightarrow{v_Qt'}$ where $v_Q \in V_k$. We use the same rules which use sets $\Phi(\overrightarrow{s'v_P})$ and $\Phi(\overrightarrow{v_Qt'})$. Weight $w_1$ only depends on $\Phi$ and thus remains unchanged for both edges. For $w_2$, we define $O(\overrightarrow{s'v_P}) = V(D_1) \setminus \Phi(\overrightarrow{s'v_P})$ and $O(\overrightarrow{v_Qt'}) = V(D_k) \setminus \Phi(\overrightarrow{v_Qt'})$. Weight of the edge $e$ remains unchanged: $$\weight{e} \coloneq \max \{ w_1(e) , w_2(e) \}, $$ where $e = \overrightarrow{s'v_P}$ when $v_P \in V_1$, and $e = \overrightarrow{v_Qt'}$ when $v_Q \in V_k$.

	





In the next section we prove some structural properties of $G$ that justifies the weighting scheme presented above.

\section{Proof of Theorem 1}\label{sec:projections_minor} 



Let $G$ be a simple, connected, $K_{2,3}$-minor-free graph. For any set $S \subseteq V(G)$ and a vertex $u \in V(G) \setminus S$, define the set 
$\Projection{u}{S} = \{ v \in S \colon \dist{u}{v} = \dist{u}{S} \}$. The set $\Projection{u}{S}$ is the set of \emph{projections} of $u$ onto $S$.

\begin{lemma}\label{lem:pro1}
    For $s,t \in V(G)$ and $u \in V(G)\setminus \Interval{s}{t}$, there exists an $(s,t)$-isometric path $P$ such that $\Projection{u}{\Interval{s}{t}} \subseteq V(P)$. 
\end{lemma}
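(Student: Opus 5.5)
The plan is to show that the projection set $X \coloneq \Projection{u}{\Interval{s}{t}}$ is \emph{totally ordered} along $F_{s,t}$. That means no two vertices of $X$ lie on the same level $N_k(s)$. Moreover, for any two vertices $x,y\in X$ with $\dist{s}{x}<\dist{s}{y}$, we want $x\in\Interval{s}{y}$. Once this is established, take the elements of $X$ in increasing distance from $s$ as $x_1,\dots,x_m$. Concatenating an $(s,x_1)$-isometric path, $(x_i,x_{i+1})$-isometric paths, and an $(x_m,t)$-isometric path gives a walk of length $\dist{s}{x_1}+\sum(\dist{s}{x_{i+1}}-\dist{s}{x_i})+(\dist{s}{t}-\dist{s}{x_m})=\dist{s}{t}$, using $x_i\in\Interval{s}{x_{i+1}}$ and $x_m\in\Interval{s}{t}$. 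This walk is therefore an $(s,t)$-isometric path containing $X$.

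\textbf{Step 1 (one projection per level).} Suppose $x,y\in X$ with $\dist{s}{x}=\dist{s}{y}=k$ and $x\neq y$. Here $1\le k<\dist{s}{t}$, since levels $0$ and $\dist{s}{t}$ are singletons. Take an $(s,x)$-, an $(s,y)$-, a $(t,x)$- and a $(t,y)$-isometric path, and $(u,x)$- and $(u,y)$-isometric paths. The latter two avoid $\Interval{s}{t}$ except at their endpoints, since $x,y$ are nearest vertices. Contract the $s$-side pieces to $s$, the $t$-side pieces to $t$, and the $u$-side pieces to $u$ (they meet only outside the interval, so we contract their union to a single vertex). This mirrors \Cref{lem:2in1level} and yields a $K_{2,3}$-minor with parts $\{x,y\}$ and $\{s,t,u\}$, a contradiction. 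Care is needed to keep the three branch sets disjoint: the $s$-side and $t$-side paths lie in levels $<k$ and $>k$ respectively, and the $u$-paths lie outside $\Interval{s}{t}$, so they are disjoint.

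\textbf{Step 2 (comparability).} Let $x,y\in X$ with $\dist{s}{x}<\dist{s}{y}$ and suppose $x\notin\Interval{s}{y}$. By \Cref{lem:linear-order} and \Cref{lem:same-block}, $F_{s,t}$ is a chain of blocks and holes. So $x$ and $y$ must lie in a common hole $D$, on opposite sides $L$ and $R$ in the notation of \Cref{lem:4-paths}. In that case there is a vertex $z$ on the level of $y$ on the same side as $x$, with $x\in\Interval{s}{z}$. Using the two sides of $D$, together with the $u$-paths to $x$ and $y$ and the portions of $F_{s,t}$ before $a$ and after $b$, I would build a $K_{2,3}$-minor with parts $\{x,y\}$ and three branch sets: one containing $s$ and the side through $a$, one containing $t$ and the side through $b$, and one containing $u$. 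This is again a contradiction.

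The main obstacle is Step 2. The difficulty is ensuring the $u$-paths stay disjoint from the hole's two sides, which follows from the fact that $x$ and $y$ are projections. One must also check that the branch sets remain connected when $x$ or $y$ coincides with $a$ or $b$. If the minor argument becomes messy, I would fall back on a direct distance argument. Projections onto a hole that lie on opposite sides and at different levels would force $u$ to be closer to some intermediate vertex of $D$, which is still a planarity/$K_{2,3}$ argument at heart.
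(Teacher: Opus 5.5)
Your overall architecture is sound. It is in fact more careful than the paper's three-line argument, which jumps from ``no isometric path contains all of $\Projection{u}{\Interval{s}{t}}$'' to a single bad pair and never checks that the branch sets of the claimed minor are disjoint. Pairwise comparability plus your concatenation does give one $(s,t)$-isometric path through all projections, and Step~1 is correct as written.

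\textbf{The gap is Step~2.} Your reduction ``$x\notin\Interval{s}{y}$ forces $x,y$ onto opposite sides of a common hole'' is false for $K_{2,3}$-minor-free graphs, because holes inside one block of $F_{s,t}$ can share an edge. Take the $8$-cycle $s\,a_1\,a_2\,a_3\,t\,b_3\,b_2\,b_1\,s$ plus the chord $a_1b_2$.
\begin{itemize}
\item It is outerplanar, $F_{s,t}=G$, and the levels are $\{s\},\{a_1,b_1\},\{a_2,b_2\},\{a_3,b_3\},\{t\}$.
\item Its only holes are $s\,a_1\,b_2\,b_1$ and $a_1\,a_2\,a_3\,t\,b_3\,b_2$.
\item For $x=b_1$ and $y=a_2$ we have $\dist{b_1}{a_2}=3$, so $x\notin\Interval{s}{y}$, yet no hole contains both.
\end{itemize}
So there is no hole $D$ whose sides $L,R$ (in the notation of \Cref{lem:4-paths}) you could build the minor from. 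Pairs like this are exactly what your contradiction must exclude. As written, Step~2, including the vague ``direct distance'' fallback, does not establish comparability.

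\textbf{The fix} is to drop holes entirely and reuse the Step~1 minor. Let $\dist{s}{x}<\dist{s}{y}$ with $x\notin\Interval{s}{y}$. Let $P$ be an $(s,t)$-isometric path through $x$ and $Q$ one through $y$.
\begin{itemize}
\item No $(s,t)$-isometric path contains both $x$ and $y$. Hence $x\notin V(Q)$, $y\notin V(P)$, $x\neq s$ and $y\neq t$.
\item $P$ and $Q$ share no vertex $w$ with $\dist{s}{x}<\dist{s}{w}<\dist{s}{y}$. Otherwise the subpath of $P$ from $s$ to $w$ followed by the subpath of $Q$ from $w$ to $t$ would be such a path.
\end{itemize}
Write $P[s,x)$ for the subpath of $P$ from $s$ to the predecessor of $x$, and similarly $P(x,t]$, $Q[s,y)$, $Q(y,t]$. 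Set $S=P[s,x)\cup Q[s,y)$ and $T=P(x,t]\cup Q(y,t]$.
\begin{itemize}
\item $S$ and $T$ are disjoint. This follows by comparing levels, except for $Q[s,y)\cap P(x,t]$, which is empty by the no-common-vertex property.
\item They are connected, through $s$ and $t$ respectively.
\item They avoid $x$ and $y$, and each is adjacent to both.
\end{itemize}
Together with your $u$-side branch set this gives a $K_{2,3}$-minor with $\{x,y\}$ as one side. This is the paper's minor, now with disjointness verified. It contains Step~1 as the case $\dist{s}{x}=\dist{s}{y}$, and with it your concatenation completes the proof.
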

\begin{proof}
    If $\Projection{u}{\Interval{s}{t}}$ is a singleton set, then we are done. Assume that there exists no $(s,t)$-isometric path $S$ such that $\Projection{u}{\Interval{s}{t}} \subseteq V(S)$. Thus there exist $a,b \in \Projection{u}{\Interval{s}{t}}$ and $(s,t)$-isometric paths $P$ and $Q$ such that $a \in V(P) \setminus V(Q)$ and $b \in V(Q) \setminus V(P)$. There exists a $K_{2,3}$-minor with $\{ s,t,u \}$ and $\{ a,b \}$ as two partite sets.
\end{proof}



\begin{lemma}\label{lem:pro2}
    For $s,t \in V(G)$ and $u \in V(G)\setminus \Interval{s}{t}$, let $\Projection{u}{\Interval{s}{t}} \subseteq V(C)$ where $C$ is a block in $\Intervalpath{s}{t}$. Then $1 \leq \cardinal{\Projection{u}{\Interval{s}{t}}} \leq 2$. Moreover if $\Projection{u}{\Interval{s}{t}} = \{a,b \}$ then $\dist{a}{b} = 1$. 
\end{lemma}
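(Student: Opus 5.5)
The lower bound $1\leq\cardinal{\Projection{u}{\Interval{s}{t}}}$ is trivial, since $\Interval{s}{t}$ is nonempty and $G$ is connected. For the upper bound and the adjacency claim, I would argue by contradiction using $K_{2,3}$-minors, in the same style as \Cref{lem:pro1}.

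Let $r=\dist{u}{\Interval{s}{t}}$. Fix, for each projection $x\in\Projection{u}{\Interval{s}{t}}$, a $(u,x)$-isometric path $T_x$. Every internal vertex of $T_x$ lies outside $\Interval{s}{t}$, since otherwise that vertex would be strictly closer to $u$ than $x$. By \Cref{lem:pro1}, all projections lie on a single $(s,t)$-isometric path $P$.

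Suppose first that there are three projections $a,b,c$, ordered along $P$ so that $\dist{s}{a}<\dist{s}{b}<\dist{s}{c}$; they are distinct vertices of $P$. Since $C$ is a block of $\Intervalpath{s}{t}$ containing $a$ and $c$, and $b$ lies strictly between them on $P$, the block $C$ contains a cycle through the subpath $P[a,c]$. Hence there is an $(a,c)$-path $Q$ in $C$ that avoids $b$, namely the other side of the cycle. Then $b$ has three internally disjoint routes to $\{a,c\}$-side branch sets:
\begin{enumerate*}[label=(\roman*)]
\item $P[a,b]$,
\item $P[b,c]$,
\item $T_b$ followed back along $T_a$ or $T_c$ through $u$.
\end{enumerate*}
To obtain a $K_{2,3}$, I would take the partite set $\{b, Z\}$, where $Z$ is the branch set formed by $Q$ together with its endpoints. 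The three other branch sets are then the interior of $P[a,b]$ (or $a$ itself), the interior of $P[b,c]$, and $u$ together with the paths $T_a$, $T_b$, $T_c$.

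I would rather organize this more cleanly. Contract $T_a\cup T_b\cup T_c$ onto $u$, which gives a vertex $u$ adjacent to $a,b,c$. Then $a,b,c$ lie on the cycle $P[a,c]\cup Q$, and $u$ is attached to three vertices of this cycle. A cycle plus a vertex with three attachments contains $K_{2,3}$: the attachments split the cycle into three arcs, and contracting one arc gives the configuration $\{u,\text{one arc endpoint}\}$ versus the three remaining pieces. More concretely, take $\{u,x\}$ as one side, where $x$ is the merged arc $Q$ (contracted to a single vertex joining $a$ and $c$). The other side is $\{a,c\}$ together with $b$, using the path $P[a,b]$ contracted appropriately. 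The routine check is that the sets are disjoint. This holds because the $T$-paths' interiors avoid $\Interval{s}{t}$, and because distinct $T$-paths can be chosen to share only an initial segment, so they form a tree rooted at $u$. This contradiction gives $\cardinal{\Projection{u}{\Interval{s}{t}}}\leq 2$.

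For two projections $a,b$ with $\dist{a}{b}\geq 2$, the same cycle argument applies. The path $P[a,b]$ has an internal vertex $w$, and $a,b$ lie on a cycle in $C$. The vertex $u$, after contracting $T_a\cup T_b$, has two attachments, so a third branch set is needed. I expect this to be the main obstacle. My plan is to use \Cref{lem:2in1level} and the structure of holes from \Cref{lem:4-paths}. Either $a,b$ are on a common hole with two internally disjoint $(a,b)$-routes in $C$ plus the route through $u$, giving three disjoint $(a,b)$-paths and hence $K_{2,3}$ with $\{a,b\}$ as the two-side. Or $a,b$ are separated in $C$, and then a shortcut through $u$ contradicts isometry only if $2r<\dist{a}{b}$, so distance constraints must be exploited. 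I would first try to show that a cycle in $C$ through $a,b$ exists whose two $(a,b)$-arcs both avoid the interior of $T_a\cup T_b$. Together with the $u$-route, this gives three internally disjoint $(a,b)$-paths, each of length at least $2$ (an arc of length $1$ would contradict $\dist{a}{b}\geq 2$), and therefore a $K_{2,3}$-minor.
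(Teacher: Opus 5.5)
Your strategy is sound. For the adjacency claim, your final plan is essentially the paper's argument: two internally disjoint $(a,b)$-routes inside $C$ and one through $u$, with $\{a,b\}$ as the two-side of the $K_{2,3}$. (The paper takes an interior vertex $c$ of an $(a,b)$-geodesic and its level-mate $d\in V(C)$, and uses $\{a,b\}$ versus $\{u,c,d\}$.)

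\textbf{The adjacency claim.} The step you flag as the main obstacle is immediate from facts you already wrote down, so you do not need \Cref{lem:2in1level} or \Cref{lem:4-paths}. The case ``$a,b$ separated in $C$'' cannot occur, since no single vertex separates two vertices of a block.
\begin{enumerate*}[label=(\roman*)]
\item Since $\dist{a}{b}\geq 2$, the block $C$ has at least three vertices, so it is $2$-connected.
\item By Menger, $C$ contains two internally disjoint $(a,b)$-paths $R_1,R_2$, each of length at least $2$.
\item All vertices of $R_1,R_2$ lie in $\Interval{s}{t}$. Every internal vertex of the $(a,b)$-route in a shortest-path tree rooted at $u$ lies outside $\Interval{s}{t}$, as you observed, and this route has length at least $2$.
\end{enumerate*}
The three routes therefore form a subdivided $K_{2,3}$. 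This Menger formulation is slightly cleaner than the paper's, which leaves implicit why the routes through $c$ and through $d$ are internally disjoint.

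\textbf{The three-projection case.} Here your route differs from the paper's. The paper keeps legs of length two from $u$ ($u a_1 a$, $u b_1 b$, $u c_1 c$) and contracts an $(a,c)$-path through $b$ into $a$, obtaining $\{u,a\}$ versus $\{a_1,b_1,c_1\}$. That argument does not use the block hypothesis, but it needs a separate degenerate case when the legs are single edges. You instead contract the legs fully and add a second $(a,c)$-route $Q$ in $C$ avoiding $b$. Two details are wrong as written.
\begin{enumerate*}[label=(\roman*)]
\item A $2$-connected graph need not contain a cycle through a prescribed path. In the $2\times 3$ grid with $s=(1,1)$ and $t=(2,3)$, the grid equals $\Intervalpath{s}{t}$ and is $2$-connected, yet the geodesic $s,(1,2),(2,2),t$ lies on no cycle. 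So $Q$ exists only because $C-b$ is connected, and it may meet $P[a,c]$ internally, which forces a shortcutting step.
\item Your concrete branch sets fail, because the contracted arc $x$ is not adjacent to $b$. In a subdivided $K_4$, the two-side must be the ends of a subdivided edge. The correct choice is $\{a,c\}$ versus $\{u,b,Q^\circ\}$, once $Q$ is internally disjoint from $P[a,c]$.
\end{enumerate*}

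Both issues disappear if you do not treat this case separately. By \Cref{lem:pro1}, three projections lie on one geodesic, so the outer two, say $a$ and $c$, satisfy $\dist{a}{c}\geq 2$. The Menger argument above, applied to $a$ and $c$, already gives the contradiction.
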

\begin{proof}
    As the graph is connected, $\cardinal{\Projection{u}{\Interval{s}{t}}} \geq 1$. Assume on the contrary $\cardinal{\Projection{u}{\Interval{s}{t}}} \geq 3$ and $\{a, b, c \} \subseteq \Projection{u}{\Interval{s}{t}}$. Without loss of generality, let $\dist{s}{a} < \dist{s}{b} < \dist{s}{c}$. Consider a vertex $a_1 \in (u,a)$-path. Contract the edges in this path such that there is an edge between $u$, $a_1$ and an edge between $a_1$, $a$. Similarly consider vertices $b_1 \in (u, b)$-path and $c_1 \in (u, c)$-path. Contract edges in these paths such that there is an edge between $u$, $b_1$ and an edge between $b_1$, $b$ and there is an edge between $u$, $c_1$ and an edge between $c_1$, $c$. Contract all the edges in a $(a,c)$-path to the vertex $a$. Thus the induced subgraph on the vertices $\{ u, a_1, b_1, c_1, a \}$ forms a $K_{2,3}$-minor in $\Intervalpath{s}{t}$. This is a contradiction. Else if $ua, ub,uc \in E(G)$, then $u \in \Interval{s}{t}$. But this contradicts Lemma \ref{lem:2in1level}.

    Let $\Projection{u}{\Interval{s}{t}} = \{ a,b \}$ and $\dist{a}{b} > 1$. Let $c$ be a vertex in an $(a,b)$-isometric path. Let $c \in V(\Intervalpath{s}{t})\cap \Sphere{i}{s}$ for some $i$. As $a,b,c \in V(C)$, there exists a vertex $d \in V(C)$ such that $d \in V(\Intervalpath{s}{t})\cap \Sphere{i}{s}$. There exists a $K_{2,3}$-minor with $\{ u, c, d \}$ and $\{ a, b \}$ as partite sets. Thus contradiction.
\end{proof}

We have the following observation:

\begin{observation}\label{obs:pro3}
    For $u,s,t \in V(G)$, if a projection $\Projection{u}{\Interval{s}{t}}$ intersects an isometric path connecting any two consecutive blocks of $\Intervalpath{s}{t}$ then, each $(s,t)$-isometric path intersects $\Projection{u}{\Interval{s}{t}}$.
\end{observation}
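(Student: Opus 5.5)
The plan is to argue by the structure of $\Intervalpath{s}{t}$ established in \Cref{lem:linear-order}: the interval is a chain of blocks $C_1, C_2, \ldots$, and consecutive blocks are joined by a (possibly trivial) isometric path. The key point is that every internal vertex of such a connecting path is the unique vertex of $\Intervalpath{s}{t}$ at its distance from $s$. Concretely, it lies at a level $q$ with $\cardinal{\Sphere{q}{s}\cap \Interval{s}{t}}=1$, where $q\in I$ in the notation of the proof of \Cref{lem:linear-order}. The same holds for the shared endpoints of the path with the two blocks.

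First, let $a\in\Projection{u}{\Interval{s}{t}}$ be a vertex lying on an isometric path $F$ connecting two consecutive blocks. I will show that $\cardinal{\Sphere{\dist{s}{a}}{s}\cap \Interval{s}{t}}=1$. Suppose not, so that some other vertex $a'\in\Interval{s}{t}$ satisfies $\dist{s}{a'}=\dist{s}{a}$. By \Cref{lem:same-block}, $a$ and $a'$ lie in a common block. I then split into cases. If $a$ is an interior vertex of $F$, it has degree $2$ in $\Intervalpath{s}{t}$ with neighbours on two different levels, so it is a cut-vertex lying in no nontrivial block, a contradiction. If $a$ is an endpoint of $F$ shared with a block, then by the definition of the sets $S_b$ in \Cref{lem:linear-order} the block boundaries are exactly the levels $q_b$ where the level has size one, which again gives a contradiction. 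The fiddly part is matching the ``connecting path'' to the definition via $I$. I would handle this by noting that every vertex of $F$ lies at a level in $I$, since any level strictly between $q_b$ and $q_{b+1}$ belongs to the block $C_b$ itself.

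Second, let $P$ be any $(s,t)$-isometric path. Then $P$ is contained in $\Intervalpath{s}{t}$ and contains exactly one vertex at every level $0,\ldots,\dist{s}{t}$, since $\dist{s}{v_i}=i$ along $P$. In particular $P$ contains a vertex of $\Sphere{\dist{s}{a}}{s}\cap \Interval{s}{t}=\{a\}$, so $a\in V(P)$. Hence $P$ intersects $\Projection{u}{\Interval{s}{t}}$, and this holds for every $(s,t)$-isometric path.

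The main obstacle I expect is the first step. It requires a careful justification that vertices on a connecting path between consecutive blocks (including its endpoints) are singleton levels of the interval, because the paper defines blocks and connecting paths somewhat informally. If the endpoint case causes trouble, I would simply choose $a$ to be an interior vertex, or a vertex at a level in $I$. The observation only needs one projection vertex with this property, and any vertex of $F$ that is a cut-vertex of $\Intervalpath{s}{t}$ separating $s$ from $t$ suffices, since every $(s,t)$-path must pass through such a cut-vertex.
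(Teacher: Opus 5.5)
Your proof is correct, and it supplies exactly the reasoning the paper leaves implicit (the paper states this observation without proof). Every vertex of a path joining two consecutive blocks lies at a level $q\in I$, i.e.\ it is the unique vertex of $\Intervalpath{s}{t}$ in $\Sphere{q}{s}$ and hence an $(s,t)$-separating cut vertex, so every $(s,t)$-isometric path, which meets each level exactly once, passes through it.

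One small fix: ``degree $2$ with neighbours on two different levels'' does not by itself make $a$ a cut vertex, since vertices of a hole can satisfy it too. Rely instead on your fallback: any level strictly between $q_b$ and $q_{b+1}$ with $q_{b+1}>q_b+1$ lies inside the nontrivial block $C_b$, so it cannot contain a vertex of a connecting path.
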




\begin{lemma}\label{lem:pro4}
     Let $u,s,t \in V(G)$ and assume the set $\Projection{u}{\Interval{s}{t}}$ has no cut-vertex. Let $C_i$ and $C_j$ be two blocks in $\Intervalpath{s}{t}$ such that $i<j$ and $\Projection{u}{\Interval{s}{t}}$ intersects $C_i$ and $C_j$. The following statements are true$\colon$

     \begin{enumerate}
         \item There exists no block $C_k$ in $\Intervalpath{s}{t}$ such that $k \neq i$ and $k\neq j$ and $\Projection{u}{\Interval{s}{t}}$ intersects $C_k$. Moreover $j=i+1$.
         
         \item For any two vertices $a,b \in \Projection{u}{\Interval{s}{t}}$, $\dist{a}{b} \leq \delta +2$ where $\delta$ is the length of the isometric path connecting $C_i$ and $C_j$.
         \item There does not exist any vertex $v \in V \setminus \Interval{s}{t}$ with $v \neq u$ such that $\Projection{v}{\Interval{s}{t}}$ intersects both $ V(C_i) $ and $ V(C_{j})$.
     \end{enumerate}
      
\end{lemma}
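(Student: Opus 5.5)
We argue every part by contradiction: from any configuration violating a statement we build a $K_{2,3}$-minor. The recurring picture is this. The blocks $C_1,C_2,\ldots$ of $\Intervalpath{s}{t}$ form a chain, and consecutive blocks are joined by isometric paths through cut-vertices, as in the proof of \Cref{lem:linear-order}. The vertex $u$ lies outside $\Interval{s}{t}$. For any two projections $a,b\in\Projection{u}{\Interval{s}{t}}$, fix $(u,a)$- and $(u,b)$-isometric paths. These meet $\Interval{s}{t}$ only at their endpoints, and we may assume they are internally disjoint away from a common prefix. The natural branch sets are therefore $s$ (with its isometric subpaths), $t$ (likewise), and $u$ (with its shortest paths). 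The hypothesis that $\Projection{u}{\Interval{s}{t}}$ has no cut-vertex will be read as follows: the projections are not all separated by, or confined to, a single cut-vertex of the chain. This is what lets us pick projections strictly on both sides of a cut-vertex.

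\textbf{Item 1.} Suppose $\Projection{u}{\Interval{s}{t}}$ meets three distinct blocks $C_i,C_k,C_j$ with $i<k<j$, at vertices $a,c,b$ respectively. Contract the $(s,a)$-isometric subpath into $s$, the $(b,t)$-isometric subpath into $t$, and the three shortest paths from $u$ into $u$. The block $C_k$ is 2-connected, and $c$ is chosen to be a non-cut-vertex of the chain. Hence $C_k$ supplies two internally disjoint routes from its entry cut-vertex to its exit cut-vertex, at least one of which avoids $c$. Combining these routes with $s,t,u$ yields a $K_{2,3}$ with partite sets $\{s',t'\}$ and three connectors (two sides of $C_k$ plus the route through $u$). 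A cleaner alternative, modelled on \Cref{lem:pro1}, would be to take $\{a,b\}$ as one side and $\{s,t,u\}$ as the other, using $c$ to route around. For $j=i+1$: suppose some block $C_k$ with $i<k<j$ does not meet the projection. Then $C_k$ still gives two disjoint routes, and the same minor appears with $u$ as the third vertex. So the intermediate blocks must be absent, and $j=i+1$.

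\textbf{Item 2.} Take $a\in C_i$ and $b\in C_{i+1}$, and let $x,y$ be the cut-vertices bounding the connecting isometric path of length $\delta$. We want to show $\dist{a}{x}\le 1$ and $\dist{y}{b}\le 1$. Suppose instead $\dist{a}{x}\ge 2$. Then, as in the second half of \Cref{lem:pro2}, pick $c$ on an $(a,x)$-isometric path inside $C_i$ and its level-mate $d$ (which exists by \Cref{lem:2in1level} and \Cref{lem:same-block}). This gives a $K_{2,3}$ with parts $\{a',x'\}$ and $\{u,c,d\}$, where $u$ reaches both sides via its paths to $a$ and to $b$ (the latter through $x$). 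The triangle inequality then gives $\dist{a}{b}\le 1+\delta+1$.

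\textbf{Item 3.} Suppose $v\ne u$ also has projections $a'\in C_i$ and $b'\in C_{i+1}$. The two "handles" $u$ and $v$ each span the connecting path, and the connecting path itself is a third route. Contracting the $u$-paths, the $v$-paths, and the connecting path gives a $K_{2,3}$ with parts $\{C_i\text{-side},C_{i+1}\text{-side}\}$ and $\{u,v,\text{path}\}$. We need $u$'s paths and $v$'s paths to be disjoint. If they intersect at some $w$, we argue that $w$ has the same projections and replace $u$ by the nearest such $w$, or we conclude $v=u$ in the intended sense. We expect this disjointness and the replacement argument to be the main obstacle, together with making precise how the no-cut-vertex hypothesis is used in item 1.
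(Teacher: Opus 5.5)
Your items~1 and~2 follow the paper's route. Item~3 has a genuine gap, namely the one you flagged yourself. It cannot be closed, because item~3 is false as stated, and the paper's proof has the same hole.

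\textbf{Items~1 and~2.} Your minor for item~1 is the paper's ``intermediate block'' minor, but the branch sets as you describe them fail. Contracting only the $(s,a)$- and $(b,t)$-subpaths leaves the entry and exit cut-vertices $x_k,y_k$ of $C_k$ outside both branch sets. Both sides of $C_k$ then pass through $x_k$ and $y_k$, so they are not internally disjoint connectors. Instead, take as branch sets all vertices of $\Intervalpath{s}{t}$ at distance at most $\dist{s}{x_k}$ from $s$, and respectively at least $\dist{s}{y_k}$. The three connectors are then:
\begin{enumerate}
\item[(i)] and (ii) the two arcs, minus their ends, of a cycle of $C_k$ through $x_k$ and $y_k$;
\item[(iii)] $u$ together with the interiors of its geodesics to $a$ and $b$, which avoid $\Interval{s}{t}$.
\end{enumerate}
This needs $C_k$ to be a nontrivial block, not a bridge. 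Done this way, $c$ plays no role, the three-block case is a special case of the $j=i+1$ case, and the no-cut-vertex hypothesis is never used.

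Item~2 is the paper's argument. If $a$ is not adjacent to $x$, a cycle of $C_i$ through $a$ and $x$ has two arcs with interior vertices, and these together with $u$ give the $K_{2,3}$. Your level-mates $c,d$ certify this, since each arc must cross the level of $c$, and only $c$ and $d$ lie on that level.

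\textbf{Item~3.} The minor with connectors $u$, $v$ and the connecting path needs the geodesics from $u$ and from $v$ to their projections to be disjoint outside $\Interval{s}{t}$. This is not true in general. Take vertices $s,a,a',z,b,b',t,x,w,y,u$ and edges $sa,sa',az,a'z,zb,zb',bt,b't,ax,xw,wy,yb,wu$.
\begin{itemize}
\item The graph is outerplanar (outer cycle $s\,a\,x\,w\,y\,b\,t\,b'\,z\,a'$ with chords $az,bz$ sharing $z$, plus the pendant $u$), hence $K_{2,3}$-minor-free.
\item We have $\dist{s}{t}=4$ and $\Interval{s}{t}=\{s,a,a',z,b,b',t\}$, since each of $x,w,y$ has $\dist{s}{\cdot}+\dist{\cdot}{t}=6$.
\item $\Intervalpath{s}{t}$ is the pair of $4$-cycles $C_1=s\,a\,z\,a'$ and $C_2=z\,b\,t\,b'$ glued at the cut-vertex $z$.
\item Both $\Projection{u}{\Interval{s}{t}}$ and $\Projection{w}{\Interval{s}{t}}$ equal $\{a,b\}$ (at distances $3$ and $2$). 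This set contains no cut-vertex and meets $C_1$ and $C_2$, so $v=w$ violates item~3.
\end{itemize}
There is an even cheaper failure, since the hypothesis constrains only $u$: a pendant vertex at $z$ has projection $\{z\}$, which meets both $V(C_1)$ and $V(C_2)$.

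Your replacement step cannot rescue this. Moving to the vertex where the geodesics meet only yields another vertex with the same projections, never a third disjoint connector.

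What the minor does prove is a weaker statement, using branch sets $C_i-x$, $C_{i+1}-y$, the connecting $(x,y)$-path and the two handles. If $u\neq v$ both have projections in $C_i$ and in $C_{i+1}$ that are not cut-vertices, then any geodesics from $u$ to such projections and any from $v$ share a vertex outside $\Interval{s}{t}$. Item~3 should be restated in that form.
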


\begin{proof}
    \begin{enumerate}
        
        \item Assume on the contrary that there exists a block $C_k$ such that $\Projection{u}{\Interval{s}{t}}$ intersects $C_k$. Without loss of generality, let $i<k<j$. Let $a \in \Projection{u}{\Interval{s}{t}} \cap V(C_i)$, $b \in \Projection{u}{\Interval{s}{t}} \cap V(C_k)$ and $c \in \Projection{u}{\Interval{s}{t}} \cap V(C_j)$. As $b$ is not a cut-vertex, there exists a vertex $v \in V(C_k)$ such that $\dist{s}{b} = \dist{s}{v}$. This forces a $K_{2,3}$ minor with $\{u,v \}$ and $\{ a,b,c \}$ as two partite sets.

        Assume $j > i+1$. Thus there exists a block $C_k$ such that $i < k < j$. Let $a \in \Projection{u}{\Interval{s}{t}} \cap V(C_i)$ and $b \in \Projection{u}{\Interval{s}{t}} \cap V(C_j)$. Let two vertices $y,z \in V(C_k)$ such that $\dist{s}{y} = \dist{s}{z}$. This forces a $K_{2,3}$-minor with $\{a,b\}$ and $\{u,y,z\}$ as two partite sets.

        \item Assume there exists two vertices $a,b \in \Projection{u}{\Interval{s}{t}}$, $\dist{a}{b} > \delta + 2$ where $\delta$ is the length of the isometric path connecting $C_i$ and $C_j$. Let this path be $P$. If $\dist{a}{P} =1$ and $\dist{b}{P} =1$ we are done. Assume without loss of generality $\dist{a}{P}>1$. Let $y,z \in V(C_i)$ such that $\dist{y}{P} = \dist{z}{P} = 1$. This forces a $K_{2,3}$-minor with $\{u,y,z \}$ and $\{ a,b \}$ as two partite sets.

        \item Assume on the contrary that there exist two vertices $u,v \in V \setminus \Interval{s}{t}$ such that $\Projection{u}{\Interval{s}{t}}$ and $\Projection{v}{\Interval{s}{t}}$ intersect both $ V(C_i) $ and $ V(C_{i+1})$. Let $p \in V(C_i)$ such that $\dist{s}{p} = \dist{s}{V(C_i)}$ and $q \in V(C_{i+1})$ such that $\dist{t}{q} = \dist{t}{V(C_{i+1})}$. If $z$ is a vertex in the unique isometric path connecting $C_i$ and $C_{i+1}$, then there is a $K_{2,3}$-minor with $\{ u,v,z \}$ and $\{ p,q \}$ as two partite sets. Else if $V(C_i) \cap V(C_{i+1}) = \{z \}$ for some vertex $z$ then, there is a $K_{2,3}$-minor with $\{ u,v,z \}$ and $\{ p,q \}$ as two partite sets. Thus contradiction.
    \end{enumerate}
\end{proof}

\begin{lemma}\label{lem:weight-correspond}
    If the maximum edge-weight of a directed $(s',t')$ path in $\overrightarrow{H}_{s,t}$ is $k$ then the eccentricity of its corresponding $(s,t)$-isometric path in $\Intervalpath{s}{t}$ is $k$.
\end{lemma}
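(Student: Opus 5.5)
Fix a directed $(s',t')$-path $\overrightarrow{W}$ with edges $e_0,e_1,\ldots,e_k$, and let $W$ be the $(s,t)$-isometric path associated with it by \Cref{lem:path-correspond}. The proof of that lemma shows $V(W)=\bigcup_i \Phi(e_i)$. We want to show $\ecc{V(W)}=\max_i \weight{e_i}$, and we prove the two inequalities separately. Throughout, for a vertex $u$ we write $u$'s projection as $\Projection{u}{\Interval{s}{t}}$. By \Cref{lem:pro1}, \Cref{lem:pro2}, \Cref{lem:pro4} and \Cref{obs:pro3}, this projection lies on one isometric path, and it is either confined to one hole (at most two adjacent vertices) or spread over two consecutive blocks that are joined by a path of length $\delta$.

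\textbf{Lower bound: $\ecc{V(W)}\ge \max_i \weight{e_i}$.} Fix $e_i$ and a vertex $u$ attaining $w_1(e_i)$. Then $\dist{u}{\Phi(e_i)}=\dist{u}{\Interval{s}{t}}\le \dist{u}{V(W)}$, because $V(W)\subseteq\Interval{s}{t}$. So $\dist{u}{V(W)}$ is at least $w_1(e_i)$, and in fact equals it since $\Phi(e_i)\subseteq V(W)$. Now let $u$ attain $w_2(e_i)$, so every shortest path from $u$ to $\Phi(e_i)$ passes through $O(e_i)$. Here we must show that no vertex of $V(W)\setminus\Phi(e_i)$ is strictly closer to $u$. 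Suppose $x\in V(W)\setminus \Phi(e_i)$ were. Then $x$ lies in an earlier or later part of the chain of holes, separated from $u$'s side by the cutset property: each hole is an $(s,t)$-cutset, and consecutive holes share at most a vertex or an edge. A shortest $(u,x)$-path would therefore cross $\Phi(e_i)$, or cross $O(e_i)$ on the way to some vertex of $\Phi(e_i)$. In the first case $\dist{u}{\Phi(e_i)}\le\dist{u}{x}$, a contradiction. In the second case we use the $K_{2,3}$-freeness, in the style of \Cref{lem:pro4}(1), to rule out reaching a farther hole more cheaply. Hence $\dist{u}{V(W)}=w_2(e_i)$.

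\textbf{Upper bound: $\ecc{V(W)}\le\max_i\weight{e_i}$.} Take any $u\in V(G)$ and let $x\in V(W)$ realise $\dist{u}{V(W)}$. Choose $i$ with $x\in\Phi(e_i)$, so that $\dist{u}{\Phi(e_i)}=\dist{u}{V(W)}$. We split into two cases.
\begin{enumerate}
\item If $\dist{u}{V(W)}=\dist{u}{\Interval{s}{t}}$, then $u$ is counted in $w_1(e_i)$.
\item Otherwise, some projection $y$ of $u$ lies off $W$. By \Cref{lem:pro2} and \Cref{lem:pro4}, $y$ lies in a hole $D_j$, or in $D_j$ and $D_{j+1}$ with $j$ adjacent to the hole containing $x$. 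We then pick the edge $e$ of $\overrightarrow{W}$ entering or leaving that hole. We need to show that every shortest path from $u$ to $\Phi(e)$ meets $O(e)=V(D_j\cup D_{j+1})\setminus\Phi(e)$: a shortest path avoiding $O(e)$ would reach $W$ without using the side of the hole that contains $y$. Combining this with $y$ being strictly closer than $W$ and with the cutset property gives a contradiction, so $u$ is counted in $w_2(e)$ and $\dist{u}{V(W)}=\dist{u}{\Phi(e)}\le w_2(e)$.
\end{enumerate}
Vertices of $\Interval{s}{t}\setminus V(W)$ are themselves in $O(e)$ for the appropriate $e$, so they fall under the second case, and the $s'$/$t'$ edges, whose $\Phi$ includes the tails $P'$ and $Q'$, handle the ends of the chain.

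The main obstacle is the second case of the upper bound: choosing the right edge $e$ and proving that all shortest paths from $u$ to $\Phi(e)$ pass through $O(e)$, rather than to some vertex of $\Phi(e)$ reached directly. Here I would use \Cref{lem:pro4}(2)–(3): the projection lies in at most two consecutive blocks at distance at most $\delta+2$, which should force all of $u$'s geodesics to $W$ to go through the hole region of $D_j\cup D_{j+1}$. If this fails, I would enlarge the choice of $e$ to the edge whose $\Phi$ contains $x$ and argue by a $K_{2,3}$-minor built on $\{u,y,x\}$ and the two branching vertices of the hole.
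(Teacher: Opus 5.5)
Your $w_1$ arguments (the first half of the lower bound, and Case~1 of the upper bound) are correct and match the paper, and your two-inequality structure is the paper's too. The gap is in both $w_2$ steps, and it cannot be closed. You need two implications. (i) A vertex $u$ counted in $w_2(e_i)$ satisfies $\dist{u}{V(W)}=\dist{u}{\Phi(e_i)}$. (ii) A vertex whose projection $\Projection{u}{\Interval{s}{t}}$ misses $W$ is counted in $w_2(e)$ for some edge $e$ of the directed path. Both are false for the weights as defined, so the statement itself fails. The paper's proof makes the same moves and simply asserts them (``$\Projection{u}{\Interval{s}{t}}\subseteq O(e)$. Thus \ldots''), so it does not contain the missing idea either. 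The root problem is that the cutset property of holes holds only inside $\Intervalpath{s}{t}$. Geodesics from $u$ to $W$ can leave the interval, and $K_{2,3}$-minor-freeness does not stop this.

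For (ii), take the 4-cycle $a\,l\,b\,r$ plus the path $l\,u\,q\,a$: two 4-cycles sharing the edge $al$, which is outerplanar. Let $s=a$ and $t=b$. Then $\Intervalpath{s}{t}$ is the single hole $a\,l\,b\,r$, and for $W=a\,r\,b$ both edges of the directed path have $\Phi=\{a,r,b\}$ and $O=\{l\}$. The vertex $u$ has projection $\{l\}$, off $W$, which is exactly your Case~2. But $\dist{u}{\Phi}=2\neq 1=\dist{u}{\Interval{s}{t}}$, and the geodesic $u\,q\,a$ avoids $l$, so $u$ enters neither $w_1$ nor $w_2$. Every other vertex contributes at most $1$, so the maximum weight is $1$ while $\ecc{V(W)}=2$. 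Since the graph is outerplanar, your fallback $K_{2,3}$-minor on $\{u,l,a\}$ does not exist. For (i), take the 6-cycle $z\,y\,a\,r\,c\,y'$ and the 4-cycle $a\,l\,b\,r$ sharing the edge $ar$, with $s=z$ and $t=b$. The holes are $D_1$ (the 6-cycle) and $D_2$ (the 4-cycle). For $W=z\,y'\,c\,r\,b$, the last edge $e=\overrightarrow{v_{rb}t'}$ has $\Phi(e)=\{r,b\}$ and $O(e)=\{a,l\}$. The unique geodesic from $y$ to $\Phi(e)$ is $y\,a\,r$, which meets $O(e)$, so $w_2(e)\ge 2$, whereas $\ecc{V(W)}=1$. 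Here $z\in V(W)\setminus\Phi(e)$ is adjacent to $y$, and the edge $yz$ meets neither $\Phi(e)$ nor $O(e)$, which contradicts your claim that such a closer vertex is ``separated from $u$'s side.'' So a correct result needs a different weighting, not a sharper analysis of $w_2$ as defined.
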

\begin{proof}
    Let $P_{s',t'}$ be a directed $(s',t')$ path in $\overrightarrow{H}_{s,t}$ with maximum edge-weight $k$. Let $Q_{s,t}$ be the corresponding $(s,t)$-isometric path in $\Intervalpath{s}{t}$. We prove $\ecc{Q_{s,t}}=k$. For an edge $e$ of $P_{s',t'}$, let $\weight{e} =k$. Since $\weight{e} = \max \{ w_1(e), w_2(e) \}$, either $w_1(e) = k$ or $w_2(e) = k$.

    \begin{claim}\label{claim:cor-1}
        $k \leq \ecc{Q_{s,t}}$.
    \end{claim} 

    \begin{claimproof}
         Assume $w_1(e) = k$. Thus there exists a vertex $u \in V(G) \setminus \Interval{s}{t}$ such that $\dist{u}{\Phi(e)} = \dist{u}{\Interval{s}{t}} = k$. As $\Phi(e) \subseteq V(Q_{s,t}) \subseteq \Interval{s}{t}$, $\dist{u}{Q_{s,t}} = k$, so $k \leq \ecc{Q_{s,t}}$.
         
         \medskip \noindent  Next assume $w_2(e) = k$.
         Let $u \in V(G)$. Consider the set $\Projection{u}{\Interval{s}{t}}$. By \Cref{lem:pro2} and \Cref{lem:pro4}, this set intersects at most two consecutive holes in $\Intervalpath{s}{t}$. If the two consecutive holes $D_i$ and $D_{i+1}$ corresponding to $e$, are disjoint then $\Phi(e)$ contains the vertices of the unique path connecting $D_i$ and $D_{i+1}$. Thus by \Cref{obs:pro3} and definition of $w_1(e)$, $\Projection{u}{\Interval{s}{t}}$ does not have any intersection with this path. If $D_i$ and $D_{i+1}$ intersect then by definition of $O(e)$, $\Projection{u}{\Interval{s}{t}} \subseteq O(e)$. If $e$ is the first edge in $P_{s',t'}$, $\Phi(e)$ contains the vertices of the path connecting $s$ and $D_1$. Similar arguments follow if $e$ is the last edge of $P_{s',t'}$. Thus $\Projection{u}{\Interval{s}{t}} \subseteq O(e)$. 
    Thus, $\dist{u}{Q_{s,t}} = k$, so $k \leq \ecc{Q_{s,t}}$.
    \end{claimproof}


    \begin{claim}\label{claim:cor-2}
        $k \geq \ecc{Q_{s,t}}$.
    \end{claim} 

    \begin{claimproof}
         Let $u \in V(G)$ such that $\ecc{Q_{s,t}} = \dist{u}{Q_{s,t}}$. We find an edge $e$ in $\overrightarrow{H}_{s,t}$ with $\weight{e} \geq \dist{u}{Q_{s,t}}$. If $u \in \Interval{s}{t}$ and $u \notin V(Q_{s,t})$ then $u$ is in $O(e)$ for some $e$. Thus $\dist{u}{Q_{s,t}} = \dist{u}{\Phi(e)}$. Also, $w_2(e) \geq \dist{u}{Q_{s,t}}$. Thus $k \geq \dist{u}{Q_{s,t}} = \ecc{Q_{s,t}}$.

    \medskip \noindent If $u \notin \Interval{s}{t}$, consider $\Projection{u}{\Interval{s}{t}}$. If $\Projection{u}{\Interval{s}{t}}$ intersects $V(Q_{s,t})$ then for some $e$ in $ \overrightarrow{H}_{s,t}$ $u$ has been considered in $w_1(e)$. Else if $\Projection{u}{\Interval{s}{t}}$ does not intersect $V(Q_{s,t})$ there exists an edge $e$ in $\overrightarrow{H}_{s,t}$ such that $\Projection{u}{\Interval{s}{t}} \subseteq O(e)$. Thus $u$ is considered in $w_2(e)$. Thus $w_2(e) = \dist{u}{\Phi(e)} = \dist{u}{Q_{s,t}}$. This yields that some edge $e'$ in $\overrightarrow{H}_{s,t}$ has $\weight{e'} \geq \ecc{Q_{s,t}}$.
    \end{claimproof} 

    Hence from \Cref{claim:cor-1} and \Cref{claim:cor-2}, the lemma is proved.
\end{proof}

\noindent A $K_{2,3}$-minor-free graph of $n$ vertices has $O(n)$ edges. As a one-time preprocessing step, we compute all-pairs shortest distances by running a BFS from each vertex, which takes $O(n \times m) = O(n^2)$ time in total and is dominated by the overall running time. For a pair $s,t \in V(G)$, finding $\Interval{s}{t} = \{ v \colon \dist{s}{v} + \dist{v}{t} = \dist{s}{t} \}$ and the edges in $\Intervalpath{s}{t}$, say  $E(\Intervalpath{s}{t}) = \{ xy \colon \dist{s}{x} + \dist{y}{t} + 1 = \dist{s}{t} \}$ take $O(n)$-time. By ordering the holes in $\Intervalpath{s}{t}$ by their distance to $s$, we find out the order realized by \Cref{lem:linear-order} in $O(n)$-time. As each $V_i$ has at most $5$ vertices (see \Cref{lem:4-paths}), tracing out the subpaths in $\mathcal{D}_i$ is local and thus takes a total of $O(n)$-time. We use standard text book algorithms to find a $(s',t')$-path in $\overrightarrow{H}_{s,t}$ that minimizes the maximum edge-weight in the path. This can be done using dynamic programming in $O(n)$-time. Performing one BFS from each interval vertex computes the projections. This takes $O(n \times \cardinal{\Interval{s}{t}}) = O(n^2)$ time. Computing $w_1$ for each edge $e$ takes $O(n)$-time as it is $\distG{G}{u}{\Interval{s}{t}}$ where $u$ is a farthest vertex from $\Phi(e)$. Weight $w_2(e)$ also takes $O(n)$-time after all-pair-shortest-distances have been computed. Thus running this subroutine over all $\binom{n}{2}$ pairs of vertices, gives the running time $O(n^2 \times n^2) = O(n^4)$. Thus we have the following result.

 \main*

\section{Proof of Theorem 2}\label{sec:cactus}
We list some lemmas on the \emph{projections} on an \emph{interval} in a cactus graph. For the rest of the section, we consider a graph $G$ as a cactus graph. Since no pair of cycles in a cactus share an edge, we have the following observation: 


\begin{observation}\label{obs:cac1}
    Let $s,t \in V(G)$ and let $H \coloneq \Intervalpath{s}{t}$. Every block $C$ in $H$ is a hole.
\end{observation}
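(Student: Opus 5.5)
The plan is to show that each nontrivial block of $H$ (one with at least three vertices) is a whole cycle of $G$, and that this cycle is chordless and has length at least $4$. Blocks that are single edges (bridges of $H$) are trivial and not meant by the statement; I would say so explicitly, since the observation is only ever applied to blocks containing a cycle.

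\emph{Step 1: a nontrivial block of $H$ is a cycle of $G$.} Let $C$ be a block of $H$ with $\cardinal{V(C)}\geq 3$. Then $C$ is $2$-connected. Since $H\subseteq G$, $C$ is a $2$-connected subgraph of $G$, so it lies inside a single block $B$ of $G$. Because $G$ is a cactus, $B$ is a cycle. A $2$-connected subgraph of a cycle with at least three vertices has minimum degree at least $2$, and the only such subgraph is the cycle itself. Hence $C=B$, and $C$ is a cycle of $G$.

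\emph{Step 2: $C$ is chordless.} Suppose $C$ had a chord $xy$ in $H$, or even in $G$. Then $xy$ together with either of the two $(x,y)$-arcs of $C$ forms a cycle. These two cycles share the edge $xy$. This contradicts the defining property of a cactus that no two cycles share an edge, which is equivalent to every block being a cycle. So $C$ is an induced cycle.

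\emph{Step 3: $C$ has length at least $4$.} By definition of $\Intervalpath{s}{t}$, every edge $xy$ of $H$ lies on an $(s,t)$-isometric path. Therefore $\cardinal{\dist{s}{x}-\dist{s}{y}}=1$, which is exactly the fact already used in the proof of \Cref{lem:4-paths}. Colouring the vertices of $H$ by the parity of their distance from $s$ gives a proper $2$-colouring, so $H$ is bipartite. Hence $C$ is an even cycle, and in particular not a triangle. Combining the three steps, $C$ is a chordless cycle of length at least $4$, that is, a hole.

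I expect no step to be a real obstacle. The only care needed is in Step 1, in justifying that a $2$-connected subgraph of a cycle must be the entire cycle: a proper subgraph of a cycle is a disjoint union of paths, and a path has a vertex of degree at most $1$.
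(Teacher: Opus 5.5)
Your proof is correct and rests on the same fact the paper cites as its entire justification: no two cycles of a cactus share an edge. Equivalently, every nontrivial block of $G$ is a cycle, so a nontrivial block of $H$ must be a whole chordless block of $G$. You also supply two points the paper leaves implicit. First, the statement must be read for blocks with at least three vertices, since bridge blocks of $H$ are not holes. Second, triangles are excluded because every edge of $H$ joins consecutive distance layers from $s$, so $H$ is bipartite.
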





\begin{lemma}\label{lem:cac2}
    For vertices $s,t \in V(G)$ and a vertex $u \in V(G)\setminus \Interval{s}{t}$.
    \begin{enumerate}
        \item $|\Projection{u}{\Interval{s}{t}}| \leq 2$.
        \item If $\Projection{u}{\Interval{s}{t}} = \{ a,b \}$ then $a$ and $b$ are cut vertices in $\Intervalpath{s}{t}$. Consequently, if for any hole $C$, $\Projection{u}{\Interval{s}{t}} \cap C \neq \emptyset$ then $|\Projection{u}{\Interval{s}{t}}| = 1 $.
    \end{enumerate}
\end{lemma}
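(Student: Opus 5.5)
We prove both parts by exploiting the structure of a cactus: every edge lies on at most one cycle, and by \Cref{obs:cac1} every block of $H\coloneqq\Intervalpath{s}{t}$ is either a bridge or a hole. Fix $u\notin\Interval{s}{t}$, write $r=\dist{u}{\Interval{s}{t}}$ and $X=\Projection{u}{\Interval{s}{t}}$. For each $x\in X$ fix a $(u,x)$-isometric path $P_x$. By minimality of $r$, the internal vertices of $P_x$ avoid $\Interval{s}{t}$. The whole argument rests on one \emph{key claim}: if $a,b\in X$ are distinct, then $a$ and $b$ lie on a common cycle $Z$ of $G$ that is not contained in $H$.

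To prove the key claim, take the closed walk $P_a\cup P_b\cup W$, where $W$ is an $(a,b)$-path inside $H$. The segment $P_a\cup P_b$ meets $\Interval{s}{t}$ only in $a,b$, so the walk contains a cycle through $a$ and $b$. This cycle uses an edge outside $H$, namely the last edge of $P_a$. We then show that $a$ and $b$ lie on a single cycle $Z$ formed by $P_a$ and $P_b$ (up to their common prefix) together with one $(a,b)$-path of $H$.

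With the key claim in hand, part (1) is short. Suppose $a,b,c\in X$ are distinct. Applying the key claim to the pairs $\{a,b\}$, $\{a,c\}$ and $\{b,c\}$ yields cycles outside $H$. Since any two vertices of a block that is a cycle lie on that unique cycle, all three of $a,b,c$ lie in a single block, which is a cycle $Z$. The three paths from $u$ (more precisely, from the last common vertex of the $P_x$'s) to $a,b,c$ are internally disjoint from $\Interval{s}{t}$. Together with the $H$-paths among $a,b,c$, they would give a vertex of degree at least $3$ inside the cycle $Z$, or two cycles sharing an edge, which is impossible in a cactus. Alternatively, we can invoke \Cref{lem:pro1} and \Cref{lem:pro2} directly, since a cactus is $K_{2,3}$-minor-free; this gives the bound $2$ whenever $X$ lies in one block, and in general the block-tree structure forces this case.

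For part (2), let $X=\{a,b\}$ and let $Z$ be the cycle from the key claim. $Z$ contains an $(a,b)$-path $W$ in $H$ and edges outside $H$. Suppose $a$ is not a cut vertex of $H$. Then $a$ lies in a hole $C$ of $H$, so $a$ has two edges of $C$ in $H$. Now $W$ starts at $a$ with one $H$-edge: either it is an edge of $C$, or it is a bridge of $H$. If it is an edge of $C$, then $Z$ and $C$ share an edge, so $Z=C\subseteq H$, contradicting that $Z$ has an edge outside $H$. If it is a bridge of $H$, then $a$ lies on the hole $C$ and also on $Z$. Since $H$ is a chain of blocks ordered by \Cref{lem:linear-order}, the part of $W$ leaving $a$ away from $C$, combined with $Z$, gives a path back around $C$. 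This produces two cycles sharing an edge, or a $K_{2,3}$-minor with parts $\{a,b\}$ and three internally disjoint $(a,b)$-routes (the two sides of $C$ continued, plus the outer route through $u$), a contradiction. Hence $a$ and $b$ are cut vertices of $H$. The "consequently" clause then follows: if $X$ meets a hole $C$ at a vertex that is not a cut vertex of $H$, then $|X|\neq 2$, so $|X|=1$ by part (1).

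The main obstacle is making the key claim rigorous, in particular ensuring that the cycle built from $P_a$, $P_b$ and $W$ genuinely uses both $a$ and $b$ and an edge outside $H$, and handling the subcase in part (2) where $a$ sits on a hole but is a cut vertex. I would first try to reduce everything to the cactus property "two cycles share at most one vertex" applied to $Z$ and the holes of $H$ that contain $a$.
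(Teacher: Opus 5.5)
\textbf{Part (1).} Your argument is fine. It can be tightened as follows. The key-claim cycles for $\{a,b\}$, $\{a,c\}$ and $\{b,c\}$ must all be one cycle $Z$, because distinct blocks share at most one vertex and the block--cut tree is acyclic. But then $Z\cap\Interval{s}{t}$ is a single proper arc of $Z$, and its ends would be both $\{a,b\}$ and $\{a,c\}$, which is impossible.

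\textbf{The gap in part (2).} It is the step ``suppose $a$ is not a cut vertex of $H$; then $a$ lies in a hole $C$''. A non-cut vertex of $H$ need not lie on a hole: $s$ and $t$ are never cut vertices of $H$, and they can be leaves of $H$. This cannot be patched, because the assertion itself is false. Take the cactus formed by the $4$-cycle $s\,x\,c\,y$ and the $5$-cycle $c\,t\,r\,q\,p$, glued at $c$.
\begin{itemize}
\item $\dist{s}{t}=3$ and $\Interval{s}{t}=\{s,x,y,c,t\}$, so $H$ is the hole $sxcy$ plus the pendant edge $ct$.
\item For $u=q$ we have $\dist{q}{c}=\dist{q}{t}=2$, $\dist{q}{x}=\dist{q}{y}=3$ and $\dist{q}{s}=4$, so $\Projection{q}{\Interval{s}{t}}=\{c,t\}$.
\item Here $t$ is not a cut vertex of $H$, and $c$ lies on the hole $sxcy$.
\end{itemize}
So both the cut-vertex claim and the ``consequently'' clause as stated (meeting \emph{any} hole forces $|X|=1$) fail.

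Your derivation of that clause silently weakened it to ``meets a hole at a vertex that is not a cut vertex of $H$''. That is a different statement.

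Your ``bridge'' subcase is also wrong. Under your hypothesis it is vacuous: a vertex of $C$ incident to a bridge of $H$ lies in two blocks, so it is a cut vertex. The contradiction you sketch there also does not exist in general. In the example take $a=c$: the path $W=ct$ is a bridge, and the cycles $Z=ctrqp$ and $C=sxcy$ share only $c$. There is no shared edge and no $K_{2,3}$-minor.

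\textbf{The paper has the same flaw.} Its proof of (2) makes the same unjustified move (``not both are cut vertices, so w.l.o.g.\ $a$ lies on a hole $C$, hence the $(a,b)$-path uses an edge of $C$''). So the defect is in the statement itself, not only in your write-up.

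\textbf{What you can actually prove.} Your key claim plus the edge-sharing argument ($Z$ and $C$ share an edge, so $Z=C\subseteq H$) proves the corrected form:
\begin{itemize}
\item If $X=\{a,b\}$, then neither $a$ nor $b$ is a non-cut vertex of $H$ lying on a hole.
\item Equivalently, each of $a,b$ is a cut vertex of $H$, or is $s$ or $t$ with degree one in $H$. This holds because every other vertex of $H$ has degree at least $2$ in $H$, so if it is not a cut vertex its unique block is a hole.
\item Hence $|X|=1$ whenever $X$ contains a vertex of a hole that is not a cut vertex of $H$.
\end{itemize}
You should state and prove that version instead.
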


\begin{proof}
    \begin{enumerate}
        \item Without loss of generality, assume $\Projection{u}{\Interval{s}{t}} = \{ a,b,c \}$. Let $P$, $Q$ and $R$ be any $(u,a)$-path, $(u,b)$-path and $(u,c)$-path respectively. Note that $P$,$Q$ and $R$ are part of cycles which share an edge. Thus contradiction.
        \item Assume on the contrary that not both of $a$ and $b$ are cut vertices. Without loss of generality, let $a \in V(C)$ where $C$ is a hole in $\Intervalpath{s}{t}$. Thus, at least one edge in the $(a,b)$-path, say $e$ is also an edge of $C$. Also, a $(u,a)$-path, the $(a,b)$-path and a $(b,u)$-path form a cycle. Thus $e$ is also part of this cycle. Thus contradiction.
    \end{enumerate}
\end{proof}

As a cactus is a $K_{2,3}$-minor-free graph, for fixed vertices $s,t$ in $G$ the subroutine presented in \Cref{sec:algorithm} returns an $(s,t)$-isometric path with minimum eccentricity. We use \Cref{obs:cac1} and \Cref{lem:cac2} to derive the time needed by the algorithm. In the algorithm the step that takes $O(n^2)$-time for $K_{2,3}$-minor-free graphs, is the computation of the projections of vertices in $G - \Interval{s}{t}$. In a cactus, a connected component of $G - \Interval{s}{t}$ is connected to $\Interval{s}{t}$ via one hole or a unique subpath joining two consecutive holes (see \Cref{lem:cac2}). A BFS confined to a connected component computes its farthest vertex from $\Interval{s}{t}$. Since the components are disjoint, each BFS traversal sums up to $O(n)$-time. Thus finding an $(s,t)$-isometric path with minimum eccentricity takes $O(n)$-time. Consequently in cactus, MESP admits an $O(n^3)$-time algorithm. Thus we have the following result.

\mainn*

\section{Conclusion}\label{sec:conc}
In this paper, we prove that \textsc{MESP} can be solved in polynomial time on $K_{2,3}$-minor-free graphs, and in cubic time on cactus graphs. The main ingredient of our algorithm is a structural analysis of the interval $F_{s,t}$ between two fixed vertices $s$ and $t$. Although this interval may contain exponentially many $(s,t)$-isometric paths, the exclusion of a $K_{2,3}$ minor forces a strong linear structure on the holes of $F_{s,t}$. Also each shortest path interacts with every hole in only a constant number of possible ways, and the projections of vertices outside the interval are highly restricted. This allows us to replace the family of all $(s,t)$-isometric paths by a linear-size auxiliary directed graph.

We believe that this approach is useful beyond the graph classes considered here. A common difficulty in \textsc{MESP} is that the set of shortest paths between two vertices can be very large, while the objective depends globally on the distance of all vertices to the chosen path. Our method separates these two issues. In particular, it would be interesting to determine whether similar decompositions to the ones in this paper exist for larger subclasses of planar graphs. For example series-parallel graphs or other bounded-treewidth graph classes where the complexity of \textsc{MESP} is still open.






    

\bibliographystyle{plain}
\bibliography{lipics-v2021-sample-article}

\appendix

\end{document}